\documentclass[journal]{IEEEtran}

\usepackage{cite}
\usepackage[cmex10]{amsmath}
\usepackage{hyperref}
\usepackage{amssymb}
\usepackage{amsthm}
\usepackage{amssymb,amsthm,amscd}
\interdisplaylinepenalty=2500

\newcommand{\F}{\mathbb{F}}
\newcommand\qp{{\frac{q-1}{p-1}}}

\newcommand{\GL}{{\mathrm{GL}}}
\newcommand{\GGL}{{\mathrm{\Gamma L}}}
\newcommand{\Tr}{{\mathrm{Tr}}}

\newtheorem{theorem}{Theorem}
\newtheorem{lemma}{Lemma}
\newtheorem{corollary}{Corollary}
\newtheorem{proposition}{Proposition}
\theoremstyle{definition}
\newtheorem{remark}{Remark}

\begin{document}
\title{\bf A new approach to the Kasami codes of type 2
\thanks{This is the accepted version of the paper published in IEEE Trans. Inf. Theory 66(4) 2020, 2456--2465, 
\url{https://doi.org/10.1109/TIT.2019.2949609} \c IEEE 2019}%
\thanks{This research is supported by
the National Natural Science Foundation of China (61672036), 
the Excellent Youth Foundation of Natural Science Foundation of Anhui Province (1808085J20), 
the Academic Fund for Outstanding Talents in Universities (gxbjZD03), 
and 
the Program of Fundamental Scientific Researches of the Siberian Branch of the Russian Academy of Sciences (proj. No.0314-2019-0016).}%
}
\author{%
Minjia Shi%
\thanks{%
     M. Shi is with the Key Laboratory of Intelligent Computing and Signal Processing of Ministry of Education, School of Mathematics, Anhui University, Anhui, 230601, P. R. China
     (e-mail: smjwcl.good@163.com)%
}
, Denis S. Krotov
\thanks{%
     D. S. Krotov is with the Sobolev Institute of Mathematics,
     pr. Akademika Koptyuga 4,
     Novosibirsk 630090, Russia (e-mail: krotov@math.nsc.ru)%
}
, Patrick Sol\'e
\thanks{%
     P. Sol\'e is with the CNRS/LAGA, University of Paris 8, 2 rue de la Libert\'e, 93 526 Saint-Denis, France (e-mail: sole@enst.fr)
} 
} 

\date{}
\maketitle
\begin{abstract}
\boldmath
The dual of the Kasami code of length $q^2-1$, with $q$ a power of $2$, is constructed by concatenating a cyclic MDS code of length $q+1$ over $F_q$ with a Simplex code of length $q-1$. This yields a new derivation of the weight distribution of the Kasami code, a new description of its coset graph, and a new proof that the Kasami code is completely regular. The automorphism groups of the Kasami code and the related $q$-ary MDS code are determined. New cyclic completely regular codes over finite fields a power of $2$, generalized Kasami codes, are constructed; they have coset graphs isomorphic to that of the Kasami codes. Another wide class of completely regular codes, including additive codes, as well as unrestricted codes, is obtained by combining cosets of the Kasami or generalized Kasami code.
\end{abstract}

\begin{IEEEkeywords}
Kasami codes, completely regular codes, cyclic codes, concatenation, automorphism group.
\end{IEEEkeywords}


\section{Introduction}\label{s:intro}
Distance-regular  graphs form the most extensively studied class of structured graphs due to their many connections with codes, designs, 
groups and orthogonal
polynomials \cite{BanIt-1,Brouwer}. 
Since the times of Delsarte \cite{Delsarte:1973}, 
a powerful way to create distance-regular graphs, 
especially in low diameters, 
{{has been to use}} the coset
graph of completely regular codes. 
A linear or additive code is {\em completely regular} 
if the weight distribution of each coset solely depends on the weight of its coset leader. 
{{Many such examples from Golay codes, Kasami codes, and others, can be found in \cite{Brouwer}}}. 
A recent survey is \cite{BRZ:CR}.

In this note we focus on the Kasami code of type (ii) 
in the sense of \cite[\S~11.2]{Brouwer}. 
This code was also studied in \cite{BRZ:2015}.
It is a cyclic code of length $2^{2m}-1$ with the two zeros $\alpha$, 
$\alpha^{ 1+2^m}$, 
where $\alpha$ denotes a primitive root of $\F_{2^{2m}}$.
We call such a code a classical Kasami code. 
We give a new construction by concatenating a cyclic MDS code with a binary Simplex code.
This gives a new derivation of its weight
distribution, a non trivial calculation in \cite{MWS},
and a characterization of its automorphism group.

More generally, we construct {\em generalized} Kasami codes of lengths $\frac{q^2-1}{p-1}$, 
when $q$ is an even power of $p$, and $p$ itself a power of $2$.
This is obtained by concatenation of a $q$-ary cyclic MDS code with a $p$-ary Simplex code.
The resulting code is cyclic with two zeros $\alpha$, $\alpha^{(p-1)(1+q)}$, 
where $\alpha$ denotes a primitive root of $\F_{q^2}$.
When $p>2$, we obtain in that way infinitely many new cyclic completely
regular codes with a coset graph isomorphic to the coset graph of a Kasami code.

We emphasize that one of the main contributions of our research 
is in applying the concatenation with the Simplex
code to \emph{cyclic} codes, 
with the result being cyclic too. 
In general, 
the concatenation approach in the construction 
of linear completely regular codes
is known, see 
\cite{RifZin:Kronecker}, \cite{RifZin:2017}, 
\cite{BRZ:2018}, \cite{BRZ:CR}.
In particular, in \cite{BRZ:2018}, 
the concatenation of dual-distance-$3$ MDS codes 
with the Simplex code was considered 
(note that a distance-$3$ MDS code, even non-linear, is always completely regular \cite[Cor.~6]{KKO:smallMDS}); the result is a two-weight $q$-ary code, 
whose dual is a covering-radius-$2$ completely regular code.
We use a similar approach, but start with a cyclic dual-distance-$4$ MDS code,
and prove also the cyclicity of the resulting $p$-ary ($p$ is a power of two) 
completely regular code of covering radius $3$, which is known as the Kasami code if $p=2$, 
and is a new cyclic completely regular code if $p>2$.
In fact, concatenation with the Simplex code can be applied as well to produce 
a completely regular $p$-ary code 
from a linear completely regular $q$-ary code, $q=p^k$, with the same intersection array,
independently on the concrete parameters.
Moreover, even more general approach guarantees that
from any unrestricted 
(not necessarily linear or additive)
$q$-ary completely regular code 
(and even an equitable partition)
we can construct a completely regular code 
with the same intersection matrix
over any alphabet $p$ such that $q$ is a power of $p$ 
(the alphabet sizes $p$ and $q$ are not required
to be prime powers).
This approach is based on the existence of 
a locally bijective homomorphism,
known as a \emph{covering}, 
see, e.g., \cite[Sect. 6.8]{GoRo},
between the Hamming graphs $H(n,q)$ 
(in the role of the \emph{cover})
and $H(\frac{q-1}{p-1}n,p)$ 
(in the role of the \emph{target}) 
of the same degree.
With the inverse homomorphism, an equitable partition or
completely regular code of the target graph automatically maps to 
an equitable partition or completely regular code, respectively, of the cover graph, with the same 
intersection matrix.
This combinatorial approach is very powerful and allows to connect equitable partitions of different graphs,
especially translation graphs (Cayley graphs over abelian groups), 
such as, for example, bilinear forms graphs and Hamming graphs.
However, it does not always allow to track 
some nice algebraic properties of the codes such as cyclicity.

The material is arranged as follows.
The next section collects the notions and notations needed for the rest of the paper. 
Section~\ref{s:concat} describes the concatenation approach and the inner code used.
Section~\ref{s:main} gives a new description of the classical Kasami codes, 
including their weight distributions, and their automorphism groups. 
Section~\ref{s:gal} describes the (new) generalized Kasami codes
obtained by allowing a more general alphabet in the outer code. 
This includes a determination of the weight distribution, and of the coset graphs. 
Section~\ref{s:union} provides a combinatorial way to derive
new completely regular codes from old, and applies it to the generalized Kasami codes. 
Section~\ref{s:concl}  recapitulates the path followed, and points out new directions.

\section{Definitions and notation}\label{s:def}

\subsection{Graphs}\label{ss:graphs}
All graphs in this note are finite, undirected, connected, without loops or multiple edges.
In a graph $\Gamma$,
the neighborhood $\Gamma(\bar v)$ of the vertex $\bar v$ is the set of vertices connected to $\bar v$.
The {\em degree} of a vertex $\bar v$ is the size of $\Gamma(\bar v)$.
A graph is {\em  regular} if every vertex has the same degree.
The $i$-neighborhood $\Gamma_i(\bar v)$ is the set of vertices at geodetic distance $i$ to $\bar v$.
A graph is {\em  distance regular} (DR) if for every pair or vertices $\bar u$ and $\bar v$ at distance $i$ apart the quantities
$$
 b_i=| \Gamma_{i+1}(\bar u)\cap \Gamma(\bar v)|, \qquad
 c_i=| \Gamma_{i-1}(\bar u)\cap \Gamma(\bar v)|,
$$
which are referred to as the \emph{intersection numbers} of the graph,
solely depend on $i$ and not on the special choice of the pair $(\bar u,\bar v)$.
The {\em  automorphism group} of a graph is the set of permutations of the vertices that preserve adjacency.
%
 The {\em Hamming graph} $H(n,q)$ (we consider only the case of prime power $q$)
 is a distance-regular graph on $\F_q^n$,
 two vectors being connected
 if they are at Hamming distance one. 
 It is well known \cite{Markov} that
 any automorphism of the Hamming graph acts as a permutation $\pi$ of
 coordinates and, for every coordinate, a permutation $\sigma_i$
 of $\F_q$; i.e., $\sigma\circ\pi$: $(v_1,\ldots,v_n)\to (\sigma_1(v_{\pi^{-1}(1)}),\ldots,\sigma_n(v_{\pi^{-1}(n)}))$.

 \subsection{Equitable partitions, completely regular codes}\label{ss:crc}
{ A partition $(P_0,\ldots,P_{k})$ of the vertex set of a graph $\Gamma$ is called
 an \emph{equitable partition}}
 (also known as regular partition, partition design, perfect coloring)
 if there are constants $S_{ij}$ such that
 $$|\Gamma(\bar v)\cap P_j|=S_{ij}\qquad \mbox{for every } \bar v\in P_i, \qquad i,j=0,\ldots,k.$$
 The numbers $S_{ij}$ are referred to as the intersection numbers,
 and the matrix
 $(S_{ij})_{i,j=0}^{k}$ as the \emph{intersection} (or quotient) \emph{matrix}.
 If the intersection matrix is tridiagonal,
 then $P_0$ is called a \emph{completely regular code} of covering radius $k$ and
 \emph{intersection array} $(S_{01},S_{12},\ldots, S_{k{-}1\, k}; S_{10},S_{21},\ldots, S_{k\, k{-}1})$.
 That is to say, a set of vertices $C$ is a completely regular code
 if the distance partition with respect to $C$ is equitable.
 As was proven by Neumaier \cite{Neumaier92}, for a distance regular graph,
 this definition is equivalent to the original Delsarte definition \cite{Delsarte:1973}:
 a code $C$ is completely regular if
 the outer distance distribution $(|C\cap \Gamma_i(\bar v)|)_{i=0,1,2,...} $ of $C$ with respect to a vertex $\bar v$ depends only on the distance from $\bar v$ to $C$.

\subsection{Linear and additive codes}\label{ss:linear}
A  linear code (that is, a linear subspace of $\F_q^n$) of length $n$, dimension $k$, minimum distance $d$ over the field $\F_q$ is called an $[n,k,d]_q$ code.
An $[n,k,d]_q$ code is called an MDS code if {$d=n-k+1$}.
The duality is understood with respect to the standard inner product.
Let $A_w$ denote the number of codewords of weight $w$. 

Linear codes are special cases of the additive codes, which are, by definition, the subgroups of the additive group of  $\F_q^n$.
A {\em coset} of an additive code $C$ is any translate of $C$ by a constant vector. A {\em coset leader} is any coset element that minimizes the weight.
The {\em weight of a coset} is the weight of any of its leaders.
The {\em coset  graph} $\Gamma_C$ of an additive code $C$
is defined on the cosets of $C$,
two cosets being connected if they differ by a coset of weight one.
\begin{lemma}[see, e.g., \cite{Brouwer}]\label{l:cr-dr}
An additive code
with distance at least $3$ is completely regular 
with intersection array $\{b_0,\ldots,b_{\rho-1};c_1,\ldots,c_{\rho} \}$ if and only if
the coset graph is distance-regular with intersection numbers $b_0,\ldots,b_{\rho-1},c_1,\ldots,c_{\rho}$.
\end{lemma}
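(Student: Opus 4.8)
The plan is to derive both implications from two elementary facts about the coset graph $\Gamma_C$, after which the equivalence is pure bookkeeping with the definitions recalled in Section~\ref{ss:crc}. First I would note that $\Gamma_C$ is vertex-transitive: for each $\bar a\in\F_q^n$ the translation $\tau_{\bar a}\colon\bar x+C\mapsto\bar x+\bar a+C$ is a well-defined permutation of the vertex set, and it preserves adjacency because $(\bar x+\bar a+C)-(\bar y+\bar a+C)=(\bar x+C)-(\bar y+C)$; these translations already act transitively on the cosets.

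Next I would identify graph distance in $\Gamma_C$ with coset weight. An edge of $\Gamma_C$ joins $U$ to $V$ exactly when $V-U$ is a coset of weight one; since $d(C)\ge 3$, such a coset contains a unique vector of weight one (two of them would differ by a nonzero codeword of weight at most two), so moving along an edge amounts to adding a fixed weight-one vector to a representative. Hence a walk of length $\ell$ starting at the zero coset $C$ reaches a coset of the form $\bar e_1+\dots+\bar e_\ell+C$ with each $\bar e_i$ of weight one, whose weight is at most $\ell$; conversely, writing a minimum-weight representative of $\bar z+C$ of weight $w$ as a sum of $w$ weight-one vectors produces a walk of length $w$ from $C$ to $\bar z+C$. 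Therefore $d_{\Gamma_C}(C,\bar z+C)=\mathrm{wt}(\bar z+C)$, and by vertex-transitivity $d_{\Gamma_C}(\bar x+C,\bar y+C)=\mathrm{wt}((\bar x-\bar y)+C)$ in general. In particular the distance partition of $\Gamma_C$ with respect to the vertex $C$ coincides with the partition $(P_0,P_1,\dots,P_\rho)$ of the cosets by weight, and $\rho$ is at once the covering radius of $C$ and the diameter of $\Gamma_C$.

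With these facts in place the two implications are short. If $C$ is completely regular, the partition $(P_0,\dots,P_\rho)$ is equitable with a tridiagonal quotient matrix $(S_{ij})$; transporting it by the automorphisms $\tau_{\bar a}$ shows that the distance partition of $\Gamma_C$ about every vertex is equitable with the same $(S_{ij})$, so for any ordered pair of vertices at distance $i$ the numbers $|\Gamma_{i+1}(\bar u)\cap\Gamma(\bar v)|=S_{i,i+1}$ and $|\Gamma_{i-1}(\bar u)\cap\Gamma(\bar v)|=S_{i,i-1}$ depend only on $i$; thus $\Gamma_C$ is distance-regular with $b_i=S_{i,i+1}$ and $c_i=S_{i,i-1}$, which are exactly the entries of the intersection array of $C$. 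Conversely, if $\Gamma_C$ is distance-regular, then the distance partition about the single vertex $C$, already identified with $(P_0,\dots,P_\rho)$, is equitable, and its quotient matrix is automatically tridiagonal because in any graph every neighbour of a vertex of $\Gamma_i(\bar u)$ lies in $\Gamma_{i-1}(\bar u)\cup\Gamma_i(\bar u)\cup\Gamma_{i+1}(\bar u)$; hence $C=P_0$ is a completely regular code with intersection array $\{b_0,\dots,b_{\rho-1};c_1,\dots,c_\rho\}$.

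The one step I expect to carry the actual content, rather than being purely formal, is the passage between a property of $\Gamma_C$ quantified over all pairs of vertices (distance-regularity) and a property of a single distinguished partition (complete regularity of $C$); this is precisely where vertex-transitivity of $\Gamma_C$ is used, together with the identification of graph distance with coset weight. Everything else amounts to matching up the two sets of notation.
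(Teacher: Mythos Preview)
The paper does not prove this lemma; it is stated with a bare reference to \cite{Brouwer}, so there is no in-paper argument to compare against. Your outline is the standard one and is sound, but one step is phrased too loosely to count as complete.

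When you write ``If $C$ is completely regular, the partition $(P_0,\dots,P_\rho)$ is equitable,'' you are asserting equitability of the weight partition \emph{in the coset graph $\Gamma_C$}, whereas the definition of complete regularity in Section~\ref{ss:crc} is about the distance partition of $C$ \emph{in the Hamming graph $H(n,q)$}. The same conflation appears in the converse direction. The bridge between the two is exactly the observation you already made for the distance computation: since $d(C)\ge 3$, for any $\bar v$ in a coset $U$ the map $\bar e\mapsto \bar v+\bar e$ is a bijection from the $H(n,q)$-neighbours of $\bar v$ onto the weight-one cosets $U+\bar e$, i.e., onto the $\Gamma_C$-neighbours of $U$, and this bijection respects the weight classes (because $\bar v+\bar e\in P_j$ in $H(n,q)$ iff $U+\bar e$ has coset weight $j$). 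Hence the distance partition of $H(n,q)$ with respect to $C$ is equitable with quotient matrix $S$ if and only if the coset-weight partition of $\Gamma_C$ is equitable with the same $S$. You invoke this bijection to identify distances but not to transfer the neighbour counts; once you insert that one sentence, both implications go through exactly as you wrote them.
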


The {\em weight distribution} 
of a code is displayed as $[\langle 0,1\rangle,\cdots,\langle w,A_w\rangle,\cdots]$,
where $A_w$ denotes the number of codewords of weight $w$.

\subsection{Cyclic codes}
A cyclic code $C$ of length $n$ over $\F_q$ is, up to polynomial representation of vectors, an ideal of the ring $\F_q[x]/(x^n-1)$. All ideals are principal
with a unique monic generator, called {\em the generator polynomial} of $C$, and
denoted by $g(x)$. If this polynomial has $t$ irreducible factors over $\F_q$, 
then the code $C$ is called a cyclic code {\em with $t$ zeros}.
In that case, the dual $C^\perp$ admits a trace representation with $t$ terms of the form
$$c(a_1,\dots,a_t)=\Big(\sum_{i=1}^t \Tr(a_i \alpha_i^j)\Big)_{j=1}^n  , $$
where
\begin{itemize}
 \item $k_i$ is the degree of the factor indexed $i$ of $g(x)$,
 \item $\Tr=\Tr_{q^{k_i}/q}$ is the trace from $\F_{q^{k_i}}$ down to $\F_q$,
 \item $a_i$ is arbitrary in $\F_{q^{k_i}}$,
 \item $\alpha_i$ is a primitive root in $\F_{q^{k_i}}^\times$,
\end{itemize}

In general the {\em relative trace} from $\F_{q^r}$ to $\F_q$,
{where $q$ is a power of a prime},
is defined  as
$$\Tr_{q^r/q}(z)=z+z^q+\cdots+z^{q^{r-1}}.$$
If $q$ is prime, then $\Tr_{q^r/q}(z)$ is the {\em absolute trace}.
See the theory of the Mattson--Solomon polynomial in \cite{MWS} for details and a proof.

\subsection{Automorphisms}
\newcommand\Aut{\mathrm{Aut}}
\newcommand\AutO{\Aut0}
\newcommand\AutM{\mathrm{MAut}}
\newcommand\AutL{\mathrm{AutL}}
\def\AutM{\mathrm{AutM}}
\def\AutS{\mathrm{AutS}}

The {\em automorphism group} $\Aut(C)$ of a $q$-ary code  $C$ of length $n$ is the stabilizer of $C$ in the automorphism group of the Hamming graph $H(n,q)$.
Note that every automorphism acts as the composition of a coordinate permutation and a permutation of the alphabet in every coordinate.

We will say that a subgroup $A$ of the automorphism group of the Hamming graph $H(n,q)$ acts \emph{imprimitively} on the coordinates
if the set of coordinates can be partitioned into subsets, called \emph{blocks}, and denoted by $N_1$, \ldots, $N_t$, $1<t<n$,
such that the coordinate-permutation part of every automorphism from $A$ sends every block to a block.
If there is no such a partition, then the action of the group on the coordinates is \emph{primitive}
(for example, this is the case for $\Aut(C)$ of any cyclic code $C$ of prime length).

We denote by $\AutO(C)$ the stabilizer of the all-zero word in $\Aut(C)$.
The {\em monomial automorphism group} $\AutM(C)$ of a $q$-ary code  $C$ is defined as the set of all {\em monomial transforms} that leave the code wholly invariant \cite{HuffmanPless}.
A monomial transform acting on $\F_q^n$ is the product of a permutation matrix by an invertible diagonal matrix, both in $\F_q^{n \times n}$. The subgroup consisting of all such transforms with trivial diagonal part is called the {\em permutation part} of the automorphism group.
The group $\AutS(C)$ consists of the 
{\em semilinear automorphisms} of a $q$-ary code $C$,
where each semilinear automorphism is the composition $\tau\circ\mu$
of a monomial transform $\mu$ and an automorphism $\tau$ 
of the field $\F_q$
(acting simultaneously on the values of all coordinates)
such that $\tau\circ\mu(C)=C$.
Note that for any linear code $C$, the groups $\AutM(C)$ and $\AutM(C^\perp)$ are always isomorphic, as well as  the groups $\AutS(C)$ and $\AutS(C^\perp)$; the same is not necessarily true for  $\Aut$ or $\AutO$.

If $q$ is not prime and $\F_q$ has a subfield $\F_p$, 
we also use the notation $\AutL_p(C)$ to denote the subgroup of $\Aut0(C)$ consisting 
of only transformations that are linear over $\F_p$. Any automorphism from $\AutL_p(C)$ acts as the composition of a coordinate permutation and 
a $\F_p$-linear permutation (from $\GL(\log_p q,p)$) 
of $\F_q$ in every coordinate. In particular, if $q$ is a power of $2$,
then
we have $\AutM(C)\le \AutS(C)\le \AutL_2(C) \le \AutO(C) \le \Aut(C)$,
and if $q=2$, then the first four of these groups are the same 
(the last group, for an additive code, 
is the product of $\AutO(C)$ and the group of translations 
by codewords).

\section{Concatenation with the Simplex code}\label{s:concat}

Concatenation is the replacement of the symbols of the codewords of some code,
called the \emph{outer code}, by the codewords of some other code, called the \emph{inner code}.
The role of the inner code in our study is played by a Simplex code;
moreover, we mainly focused on the case when this code is cyclic.

Let $p$ be a prime power; let $q=p^k$.
Let $\xi$  be a root of unity of order $\qp$ in $\F_q$.
We assume that $k$ and $p-1$ are coprime; hence, so are $\frac{q-1}{p-1}$ and $p-1$
(indeed, by induction, $\frac{p^k-1}{p-1}\equiv k\bmod p-1$).
It follows that
\begin{equation}\label{eq:Fq=}
\F_q^\times = \big\{b \xi^l \,|\, b \in \F_p^\times, l\in\{1,\ldots,\textstyle\frac{q-1}{p-1}\}\big\}
\end{equation}
and, in particular, all powers of $\xi$ are mutually linear independent over $\F_p$.
Then, the matrix $(\xi^1,\ldots,\xi^{\frac{q-1}{p-1}})$, considered as a $k\times\frac{q-1}{p-1}$ matrix over $\F_p$, is a generator matrix of a Simplex code of size $q$,
where the distance between any two different codewords is $p^{k-1}$ \cite[Th. 2.7.5]{HuffmanPless}.

\textbf{Encoding the Simplex code.}
Let $L$ be a nondegenerate $\F_p$-linear function from $\F_q$ to $\F_p$.
Any such function can be represented
as $L(\cdot)\equiv \Tr_{q/p}(a\cdot)$ for some
$a \in \F_q$ , and in our context we can always
assume that $L$ is $\Tr_{q/p}$.
Define $\phi:$ $\F_q\to \F_p^\qp$ by
$$ \phi(z) = \big( L(z\xi^1),\ldots ,L(z\xi^\qp) \big); $$
expand its action to $\F_q^n$ coordinatewise,
and to any set of words in $\F_q^n$ wordwise.

\begin{lemma}\label{l:scaled}
 For any two words $\bar y$, $\bar z$ in $\F_q^n$, 
 we have $d(\phi(\bar y),\phi(\bar z))=p^{k-1}d(\bar y,\bar z)$.
 I.e., $\phi$ is a scaled isometry.
\end{lemma}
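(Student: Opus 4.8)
The plan is to reduce the claim about arbitrary pairs $\bar y,\bar z\in\F_q^n$ to a statement about a single coordinate, and then to the single-symbol map $\phi\colon\F_q\to\F_p^{(q-1)/(p-1)}$. First I would observe that Hamming distance is additive over coordinates, so that $d(\phi(\bar y),\phi(\bar z))=\sum_{i=1}^n d(\phi(y_i),\phi(z_i))$, where $y_i,z_i$ are the $i$-th coordinates of $\bar y,\bar z$. Since $\phi$ on a single symbol takes values in $\F_p^{(q-1)/(p-1)}$, it suffices to prove that for $u,v\in\F_q$ we have $d(\phi(u),\phi(v))=p^{k-1}$ when $u\neq v$ and $0$ when $u=v$; summing over the coordinates then turns the count of positions where $y_i\neq z_i$ (which is $d(\bar y,\bar z)$) into $p^{k-1}d(\bar y,\bar z)$. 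The $u=v$ case is immediate, so the whole content is the single nonzero-difference computation.

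Next I would reduce the two-symbol statement to a weight statement. Because $L=\Tr_{q/p}$ is $\F_p$-linear and $\phi(z)=\big(L(z\xi^1),\ldots,L(z\xi^{(q-1)/(p-1)})\big)$ is built coordinatewise from $z\mapsto L(z\xi^l)$, the map $\phi$ is itself $\F_p$-linear; hence $d(\phi(u),\phi(v))=d(\phi(u-v),0)=\mathrm{wt}(\phi(w))$ where $w:=u-v\in\F_q^\times$. So I must show: for every nonzero $w\in\F_q$, exactly $p^{k-1}$ of the values $L(w\xi^1),\dots,L(w\xi^{(q-1)/(p-1)})$ are nonzero, i.e. exactly $\frac{q-1}{p-1}-p^{k-1}=\frac{p^{k-1}-1}{p-1}$ of them vanish.

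To count the vanishing positions I would use the kernel structure. The set $H:=\{z\in\F_q:\Tr_{q/p}(z)=0\}$ is an $\F_p$-hyperplane of $\F_q$, of size $p^{k-1}$, since $\Tr_{q/p}$ is a nonzero $\F_p$-linear functional. For fixed nonzero $w$, the position $l$ contributes a zero exactly when $w\xi^l\in H$, i.e. when $\xi^l\in w^{-1}H$. Now invoke the decomposition \eqref{eq:Fq=}: every element of $\F_q^\times$ is uniquely $b\xi^l$ with $b\in\F_p^\times$ and $l\in\{1,\ldots,\frac{q-1}{p-1}\}$, and $H$ is $\F_p^\times$-invariant, so the nonzero elements of $w^{-1}H$ split into orbits of size $p-1$ under multiplication by $\F_p^\times$, each orbit containing exactly one power $\xi^l$ with $l$ in the index range. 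Hence the number of indices $l$ with $\xi^l\in w^{-1}H$ equals $\frac{|w^{-1}H|-1}{p-1}=\frac{p^{k-1}-1}{p-1}$ — note $0\in H$ but $0$ is not of the form $\xi^l$, which is why we subtract $1$. Therefore the number of nonzero positions of $\phi(w)$ is $\frac{q-1}{p-1}-\frac{p^{k-1}-1}{p-1}=\frac{p^k-p^{k-1}}{p-1}=p^{k-1}$, as required. (This is of course consistent with the stated fact that the generator matrix $(\xi^1,\ldots,\xi^{(q-1)/(p-1)})$ yields a simplex code with all nonzero weights equal to $p^{k-1}$; the argument above is essentially a self-contained re-derivation of that weight.)

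The only mildly delicate point — the part I would be most careful about — is the bookkeeping with $0$ in the kernel and with the index range $\{1,\ldots,\frac{q-1}{p-1}\}$ rather than $\{0,\ldots,\frac{q-1}{p-1}-1\}$: one must make sure that $\xi^{(q-1)/(p-1)}=1$ is counted once and that the all-zero symbol $w=0$ is excluded so that $w^{-1}H$ is a well-defined coset. Everything else is routine linear algebra over $\F_p$ together with the already-established identity \eqref{eq:Fq=} and the coprimality of $k$ and $p-1$ that guarantees it.
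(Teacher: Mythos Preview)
Your proof is correct and follows essentially the same approach as the paper: reduce to a per-coordinate comparison and invoke the fact that the simplex code has constant nonzero weight $p^{k-1}$. The only difference is that the paper simply cites this one-weight property, whereas you give a self-contained derivation of it via the kernel of $\Tr_{q/p}$ and the decomposition \eqref{eq:Fq=}; both routes lead to the same one-line summation at the end.
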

\begin{proof} 
Immediately from the fact that {the Simplex code} is a one-weight code with nonzero weight $p ^{k-1}$,
we have
\begin{IEEEeqnarray*}{rCl}
 d(\phi(y_1,\ldots,y_n),\phi(z_1,\ldots,z_n)) \hspace{-4cm}\\
 &=&  d \big(  (\phi(y_1),\ldots,\phi(y_n)),(\phi(z_1),\ldots,\phi(z_n)) \big) \\
  &=& \sum_{i=1}^n d(\phi(y_i),\phi(y_i))  
    = \sum_{i=1}^n p ^{k-1} d(y_i,y_i)  \\ & =& p^{k-1}d((y_1,\ldots,y_n),(z_1,\ldots,z_n)).
\end{IEEEeqnarray*}
\end{proof}

\begin{lemma}\label{l:GM}
Assume that {$M=(M_{ij})_{i=1}^m \vphantom{M}_{j=1}^{n}$}
is an $m\times n$ generator matrix of a code $C$ in $\F_q^n$. Then the matrix
$$\Phi(M)=\left(
\begin{array}{c@{\ }c@{\ }c@{\ }c@{\ }c@{\ }c}
M_{11}\xi^1 & ... & M_{11}\xi^\qp & M_{12}\xi^1
& \ldots &  M_{1n}\xi^\qp \\
... & ... & ... & ...
& \ldots &  ... \\
M_{m1}\xi^1 & ... & M_{m1}\xi^\qp & M_{m2}\xi^1
& \ldots &  M_{mn}\xi^\qp
\end{array} \right),$$
considered as a $km \times \qp n$ matrix over $\F_p$,
is a generator  matrix of the code $\phi(C)$
in $\F_p^{\qp n}$.
\end{lemma}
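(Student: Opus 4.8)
The plan is to identify each row of $\Phi(M)$, read as a vector over $\F_p$, with the $\phi$-image of an explicit codeword of $C$, and then to close the argument by $\F_p$-linearity. First I would fix the $\F_p$-basis $\beta_1,\dots,\beta_k$ of $\F_q$ underlying the identification ``considered as a matrix over $\F_p$'' (each $\F_q$-entry being replaced by the column of its $k$ coordinates), together with the trace-dual basis $\gamma_1,\dots,\gamma_k$, characterized by $\Tr_{q/p}(\gamma_s\beta_t)=\delta_{st}$; the role of the dual basis is the elementary identity that, for any $w\in\F_q$, the $s$-th $\F_p$-coordinate of $w$ in the basis $\beta_1,\dots,\beta_k$ equals $\Tr_{q/p}(\gamma_s w)$. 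I would also record two standing facts: $\phi$ is $\F_p$-linear, being assembled from $\Tr_{q/p}$ and multiplications by the fixed elements $\xi^1,\dots,\xi^{\qp}$; and $\phi$ is injective, which follows at once from Lemma~\ref{l:scaled} applied with the second argument equal to $\bar 0$.

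The crux is a bookkeeping identity. Write $\bar m_1,\dots,\bar m_m\in\F_q^n$ for the rows of $M$. Then the row of $\Phi(M)$ that arises as the $s$-th $\F_p$-coordinate of the $i$-th row of the displayed $\F_q$-matrix is exactly $\phi(\gamma_s\bar m_i)$. Indeed, in the block of columns indexed by the outer coordinate $j$, its entry in the inner position $l$ is the $s$-th $\F_p$-coordinate of $M_{ij}\xi^l$, that is, $\Tr_{q/p}(\gamma_s M_{ij}\xi^l)$ by the dual-basis identity; and this is precisely the entry in position $l$ of the $j$-th block of $\phi(\gamma_s\bar m_i)=\big(\phi(\gamma_s M_{i1}),\dots,\phi(\gamma_s M_{in})\big)$. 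I expect this to be the step that needs the most care, because it is where the otherwise invisible convention for passing from an $\F_q$-matrix to an $\F_p$-matrix must be matched against the definitions of $\phi$ and of the simplex generator matrix; everything else is formal.

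It then remains to note that the $km$ words $\gamma_s\bar m_i$ (for $1\le s\le k$ and $1\le i\le m$) all lie in $C$, since $C$ is $\F_q$-linear and $\bar m_i\in C$, and that they span $C$ over $\F_p$: expanding the coefficients of an arbitrary $\F_q$-linear combination $\sum_i a_i\bar m_i$ of the rows of $M$ in the basis $\gamma_1,\dots,\gamma_k$ exhibits it as an $\F_p$-linear combination of the $\gamma_s\bar m_i$. Applying the $\F_p$-linear map $\phi$, the vectors $\phi(\gamma_s\bar m_i)$ — that is, the rows of $\Phi(M)$ — span $\phi(C)$ over $\F_p$, so $\Phi(M)$ is a generator matrix of $\phi(C)$. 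For full row rank (when the $\bar m_i$ are $\F_q$-independent), one adds that the $km$ words $\gamma_s\bar m_i$ are then $\F_p$-independent and, $\phi$ being injective, so are their images.
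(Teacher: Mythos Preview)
Your argument is correct. The core identity you isolate---that the $(i,s)$-th $\F_p$-row of $\Phi(M)$ equals $\phi(\gamma_s\bar m_i)$ via the trace--dual-basis formula $\Tr_{q/p}(\gamma_s\cdot)$---is exactly right, and once that is in hand the spanning and (when $M$ has full rank) independence claims follow as you say.

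The paper takes a closely related but slightly different route: rather than identifying each individual $\F_p$-row of $\Phi(M)$, it starts from an arbitrary element of the $\F_p$-row span, notes that the coefficients applied to the $i$-th $\F_q$-row amount to some $\F_p$-linear functional $L_i:\F_q\to\F_p$, and then invokes the fact that every such functional has the form $L_i(\cdot)=L(a_i\cdot)$ for a unique $a_i\in\F_q$, collapsing the sum to $\phi\big(\sum_i a_i\bar m_i\big)$. Your dual-basis identity and the paper's functional representation are two faces of the same nondegeneracy of the trace pairing, so the proofs are morally the same; the trade-off is that the paper's version avoids fixing a basis for the $\F_q\to\F_p^k$ identification and handles the whole span in one stroke, while yours makes each row visible and, as a bonus, cleanly yields the full-rank statement that the paper does not bother to spell out.
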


\begin{proof}
The proof is straightforward.
Any codeword $\bar c$ of the code generated by $\Phi(M)$
is a linear combination of its rows, and so is of the form
$$\bar c=\sum_{i=1}^m \big(L_i(M_{i1}\xi^1), \ldots,  L_i(M_{in}\xi^\qp)\big)$$
for some $\F_p$-linear functions $L_i:\F_q\to\F_p$. Since $L_i(\cdot)\equiv L(a_i\cdot)$ for some $a_i\in \F_q$, we have
\begin{eqnarray*}
 \bar c&=&\sum_{i=1}^m \big({L(a_iM_{i1}\xi^1)}, \ldots,  L(a_iM_{in}\xi^\qp)\big) 
 \\&=&
\sum_{i=1}^m \big(\phi(a_i M_{i1}),\ldots,\phi(a_i M_{in})\big)\\
&=&
\sum_{i=1}^m \big(\phi(a_i M_{i1},\ldots,a_i M_{in})\big)\\
&=&
\phi \Big(\sum_{i=1}^m a_i(M_{i1},\ldots,M_{in})\Big).
\end{eqnarray*}
That is, $\bar c$ is a $\phi$-image
of some codeword of $C$.
\end{proof}

The following lemma provides an alternative explanation to the fact
that completely regular codes with identical coset graphs can exist 
over different alphabets \cite{RifZin:2017}.

\begin{lemma}\label{l:coset-graph}
 The coset graph $\Gamma_{C^\perp}$ of the dual $C^\perp$ of a linear code $C$ in $\F_q^n$ 
 and the coset graph $\Gamma_{\phi(C)^\perp}$ of the dual $\phi(C)^\perp$ of its image 
 $\phi(C)$ in $\F_p^{\qp n}$ are isomorphic.
\end{lemma}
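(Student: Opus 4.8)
The plan is to set up an explicit bijection between the cosets of $C^\perp$ in $\F_q^n$ and the cosets of $\phi(C)^\perp$ in $\F_p^{\qp n}$, and then check that it preserves the adjacency relation of the coset graphs, i.e. that it sends cosets of weight one to cosets of weight one. The key observation is that $\phi$, as a scaled isometry (Lemma~\ref{l:scaled}), is in particular an injective $\F_p$-linear map $\F_q^n\to\F_p^{\qp n}$ whose image is $\phi(C)$ together with all its translates living in a fixed $\F_p$-subspace $V=\phi(\F_q^n)$ of $\F_p^{\qp n}$ of dimension $kn$. Since $\phi$ is $\F_p$-linear and injective, it induces a group isomorphism between $\F_q^n/C$ and $V/\phi(C)$; the point is to promote this to an isomorphism of the whole ambient quotients by identifying $\F_q^n/C$ with $\F_p^{\qp n}/\phi(C)^\perp{}^{\perp}$-style data. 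Concretely, I would work on the dual side directly: the cosets of $C^\perp$ are in natural bijection with the $\F_q$-linear functionals on $\F_q^n$ modulo those vanishing on $C$ — equivalently, with $\F_q$-linear maps $C\to\F_q$ — and similarly the cosets of $\phi(C)^\perp$ correspond to $\F_p$-linear maps $\phi(C)\to\F_p$.

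So the first step is to write down the map on cosets. Using the standard inner product, a coset of $C^\perp$ is $\bar x+C^\perp$, and the pairing $\bar c\mapsto\langle\bar x,\bar c\rangle$ identifies it with an element of $\mathrm{Hom}_{\F_q}(C,\F_q)$; composing with the fixed nondegenerate $\F_p$-functional $L=\Tr_{q/p}$ gives an element of $\mathrm{Hom}_{\F_p}(C,\F_p)$. Meanwhile, $\phi$ restricts to an $\F_p$-linear isomorphism $C\to\phi(C)$, so $\mathrm{Hom}_{\F_p}(C,\F_p)\cong\mathrm{Hom}_{\F_p}(\phi(C),\F_p)$, which is exactly the set of cosets of $\phi(C)^\perp$ in $\F_p^{\qp n}$. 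Chasing these identifications, one checks (this is the routine linear-algebra part I would not grind through) that the composite
$$ \Psi:\ \F_q^n/C^\perp\ \longrightarrow\ \F_p^{\qp n}/\phi(C)^\perp, \qquad \bar x+C^\perp\ \longmapsto\ \tilde{\bar x}+\phi(C)^\perp, $$
where $\tilde{\bar x}$ is built coordinatewise from $\bar x$ by the "dual" encoding $z\mapsto\big(L(z\xi^1),\ldots,L(z\xi^{\qp})\big)$ applied with the representation \eqref{eq:Fq=} of $\F_q^\times$, is a well-defined bijection. (Equivalently and more cleanly: $\phi(C)^\perp$ is generated over $\F_p$ by the rows of $\Phi(H)$ for $H$ a generator matrix of $C^\perp$, via Lemma~\ref{l:GM} applied to $C^\perp$, together with $\frac{q-1}{p-1}-1$ extra "parity" rows per original coordinate coming from the fact that the simplex-code columns $\xi^1,\dots,\xi^{\qp}$ span only a $k$-dimensional space; one reads off the quotient from this description.)

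The second step — the heart of the matter — is to show $\Psi$ preserves adjacency. Two cosets $\bar x+C^\perp$ and $\bar x'+C^\perp$ are adjacent in $\Gamma_{C^\perp}$ iff $\bar x-\bar x'$ lies in a weight-one coset of $C^\perp$, i.e. iff there is $\bar e$ of weight $1$ with $\bar x-\bar x'\in\bar e+C^\perp$. Under $\Psi$, I must show this is equivalent to $\tilde{\bar x}-\tilde{\bar x'}$ lying in a weight-one coset of $\phi(C)^\perp$ in $\F_p^{\qp n}$. One direction uses that the "dual encoding" of a vector supported on a single $\F_q$-coordinate produces a vector supported on the corresponding length-$\qp$ block, and within that block one can always reach Hamming weight $1$ modulo the simplex code's dual (the dual of a length-$\qp$ simplex code over $\F_p$ has minimum-weight-one cosets covering everything of low weight — indeed every coset of that dual has a weight-$\le 1$ leader, since the simplex code is perfect-like in the relevant sense: its covering radius and dual distance are both $1$). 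The converse direction uses that $\phi$ is a scaled isometry so a genuinely "small-weight" difference in $\F_p^{\qp n}$ cannot come from a difference of $\F_q$-weight $\ge 2$. I expect the main obstacle to be exactly this adjacency-preservation step: one has to handle carefully the fact that a single $\F_q$-symbol change need not map to a single $\F_p$-symbol change, and reconcile this with the quotient by $\phi(C)^\perp$ (not by $\phi(C)^\perp$ restricted to a block) — in other words, showing that the weight-one cosets of $\phi(C)^\perp$ are precisely the images under $\Psi$ of the weight-one cosets of $C^\perp$, using that the extra per-block parity relations in $\phi(C)^\perp$ let us "rotate" any nonzero block pattern down to a single nonzero entry. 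Once adjacency is matched in both directions, $\Psi$ is a graph isomorphism and the lemma follows.
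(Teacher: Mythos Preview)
Your abstract framework via $\mathrm{Hom}$ spaces is sound and is in fact the same identification the paper uses, phrased as syndromes: $\F_q^n/C^\perp\cong\mathrm{Hom}_{\F_q}(C,\F_q)\cong\F_q^m$ via the check matrix $M$, and $\F_p^{\qp n}/\phi(C)^\perp\cong\mathrm{Hom}_{\F_p}(\phi(C),\F_p)\cong\F_p^{km}\cong\F_q^m$ via $\Phi(M)$. The problem is that the explicit formula you then write down, $\tilde{\bar x}=\phi(\bar x)$, is \emph{not} what this chase produces, and it does not give a bijection. Concretely, take $p=2$ and $k\ge 3$: the binary simplex code is self-orthogonal (any two distinct nonzero codewords meet in $2^{k-2}$ positions), so $\phi(\F_q^n)\subset\phi(\F_q^n)^\perp\subset\phi(C)^\perp$, and your $\Psi$ sends \emph{every} coset to the zero coset. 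The adjacency argument you sketch (reduce $\phi(\bar e)$ to weight~$1$ inside its block modulo the Hamming code) inherits this defect: in the self-orthogonal case $\phi(\bar e)$ is already in $\phi(C)^\perp$, so you land in the zero coset, not a weight-$1$ coset. Your converse direction also cannot work as stated, since a weight-$1$ vector of $\F_p^{\qp n}$ is not in the image of $\phi$, so the scaled-isometry property of $\phi$ says nothing about it.

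The paper's proof is the clean execution of your $\mathrm{Hom}$ idea, avoiding the faulty explicit map entirely. It identifies both quotient groups with the common syndrome space $\F_q^m$ (using $M$ and $\Phi(M)$ as check matrices), and then observes that the weight-$1$ cosets on each side are exactly the nonzero scalar multiples of the columns of the respective check matrix. For $M$ these are $a\,(M_{1j},\ldots,M_{mj})^{\mathrm T}$ with $a\in\F_q^\times$; for $\Phi(M)$ they are $b\xi^l\,(M_{1j},\ldots,M_{mj})^{\mathrm T}$ with $b\in\F_p^\times$ and $1\le l\le\qp$. By \eqref{eq:Fq=} these two sets of ``connecting syndromes'' coincide as subsets of $\F_q^m$, so both coset graphs are the same Cayley graph on $\F_q^m$. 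No block-by-block reduction or covering-radius argument is needed.
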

\begin{proof}
 We first observe that the number of cosets of $C^\perp$ coincides with size of the code $C$,
 which is the same as the size of the code $\phi(C)$,
 which is, in its turn, the number of cosets of  $\phi(C)^\perp$. 
 That is, the considered graphs have the same number of vertices.

 Let $C$ be a code over $\F_q$ with a generator matrix $M$, 
 which is a check 
 matrix of the dual code $C^\perp$.
 The cosets of $C^\perp$ naturally correspond
 to the syndromes under the action of the check matrix $M$.
 By the definition of the coset graph $\Gamma_{C^\perp}$, 
 two cosets are adjacent in the coset graph
 if and only if the difference
 between the corresponding syndromes is
 the syndrome of a weight-$1$ word.
 We call the syndromes of the weight-$1$ words the \emph{connecting syndromes}
 because they play a connecting role for the coset graph
 (this terminology is in agree with that for a more general class of graphs called Cayley graphs). 
 For the check matrix {$M=(M_{ij})_{i=1}^m \vphantom{M}_{j=1}^{n}$} over
 $\F_q$, the connecting syndromes
 are 
 \begin{equation}\label{eq:syn-a}
a(M_{1j},\ldots,M_{mj})^{\mathrm T},\quad j\in \{1,\ldots,n\}, \quad a\in \F_q^\times
 \end{equation}
 (corresponding to the weight-$1$ words with $a$ in the $j$-{th} position).
 
 Now, consider the matrix $\Phi(M)$,
 which, by Lemma~\ref{l:GM},
 is a generator matrix for $\phi(C)$
 and a check matrix for $\phi(C)^\perp$.
 The connecting syndromes are 
  \begin{IEEEeqnarray}{l}\label{eq:syn-b}
 b(\xi^l M_{1j},\ldots,\xi^l M_{mj})^{\mathrm T}, 
 \\ 
  \nonumber
 \qquad
 j\in \{1,\ldots,n\},
 \quad l\in \Big\{1,\ldots,\qp\Big\},
 \quad b\in \F_p^\times,  
  \end{IEEEeqnarray}
 corresponding to the weight-$1$
 words with $b$
 in the position $(j-1)\qp +l$.
  Since $\{ b\xi^l \mid b\in \F_p^\times,\ l\in \{1,\ldots,\qp\}\}$
  coincides with $\F_q^\times$ (see \eqref{eq:Fq=}), 
  we find that the sets of connecting syndromes \eqref{eq:syn-a}, \eqref{eq:syn-b} 
  for both matrices coincide; call this set $S$.
  
  Two cosets of $C^\perp$ are adjacent in $\Gamma_{C^\perp}$ 
  if and only if the corresponding syndromes have the difference in $S$.
    Two cosets of $\phi(C)^\perp$ are adjacent in $\Gamma_{\phi(C)^\perp}$ 
  if and only if the corresponding syndromes have the difference in $S$.
  So, indexing the cosets by the syndromes
  induces an isomorphism between the $\Gamma_{C^\perp}$ and $\Gamma_{\phi(C)^\perp}$.
 \end{proof}

\begin{corollary}\label{c:crc-ia}
If the code $C^\perp$ is completely regular,
then the code $\phi(C)^\perp$
is completely regular with the same intersection array.
\end{corollary}

In the rest of this section, we establish important relations 
between the automorphism groups of a code and its concatenated image.

\subsection{Automorphism group}

\begin{lemma}\label{l:aut-to-aut}
The group $\AutL_p(C)$ is isomorphic to a subgroup
of $\AutM(\phi(C))$ such that the action of this subgroup 
on the coordinates is imprimitive with the blocks
$$\Big\{1,\ldots ,\qp\Big\},\ \ldots,\ \Big\{\qp(n{-}1){+}1,\ldots ,\qp n\Big\}.$$
\end{lemma}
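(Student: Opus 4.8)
The plan is to produce, for each automorphism in $\AutL_p(C)$, an explicit monomial transform of $\phi(C)$ that acts blockwise, and to check that the assignment is a group homomorphism with trivial kernel. First I would write an arbitrary element of $\AutL_p(C)$ as $g=\sigma\circ\pi$, where $\pi$ permutes the $n$ coordinates and, in each coordinate $j$, $\sigma_j\in\GL(k,p)$ is an $\F_p$-linear bijection of $\F_q$. The key observation is that $\phi$ intertwines such a map with a monomial transform of $\F_p^{\qp n}$: since $\phi(z)=(\Tr_{q/p}(z\xi^1),\ldots,\Tr_{q/p}(z\xi^{\qp}))$ is itself an $\F_p$-linear injection $\F_q\to\F_p^{\qp}$, the composition $\phi\circ\sigma_j$ is an $\F_p$-linear injection from $\F_q$ into $\F_p^{\qp}$ whose image is exactly $\phi(\F_q)$ (because $\sigma_j$ is a bijection of $\F_q$). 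The crucial point, to be verified, is that $\phi\circ\sigma_j$ agrees with $\mu_j\circ\phi$ for a unique \emph{monomial} transform $\mu_j$ of the block $\F_p^{\qp}$; i.e.\ that relabelling/rescaling the $\qp$ coordinates of the block is enough to realize the action of $\sigma_j$ on the simplex-code encoding. This should follow from the fact that the simplex code (the row space of $(\xi^1,\ldots,\xi^{\qp})$ over $\F_p$) has monomial automorphism group acting transitively enough: a linear bijection of $\F_q$ permutes the $\qp$ functionals $z\mapsto\Tr_{q/p}(z\xi^l)$ up to nonzero $\F_p$-scalars, precisely because $\F_q^\times=\{b\xi^l:b\in\F_p^\times,\,l\in\{1,\ldots,\qp\}\}$ by \eqref{eq:Fq=}. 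Concretely, $\sigma_j^{*}(\Tr_{q/p}(\cdot\,\xi^l))$ is again a nonzero $\F_p$-linear functional on $\F_q$, hence equals $\Tr_{q/p}(\cdot\,c)$ for a unique $c\in\F_q^\times$, which we write as $c=b\xi^{l'}$; the map $l\mapsto l'$ is a permutation of $\{1,\ldots,\qp\}$ and $b=b(l)$ the attached scalar, giving $\mu_j$.

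Next I would assemble the global map: define $\psi(g)=\mu\circ\bar\pi$, where $\bar\pi$ is the coordinate permutation of $\F_p^{\qp n}$ that sends block $\{\qp(i{-}1){+}1,\ldots,\qp i\}$ to block $\{\qp(\pi(i){-}1){+}1,\ldots,\qp\pi(i)\}$ preserving internal order, and $\mu$ is the monomial transform acting on block $j$ by the $\mu_j$ constructed above. By construction $\psi(g)$ permutes the listed blocks, so its action on coordinates is imprimitive with exactly those blocks. Then I must check $\psi(g)\in\AutM(\phi(C))$: writing $M$ for a generator matrix of $C$ and using Lemma~\ref{l:GM} that $\Phi(M)$ generates $\phi(C)$, the identity $\phi\circ g=\psi(g)\circ\phi$ on $\F_q^n$ together with $g(C)=C$ gives $\psi(g)(\phi(C))=\phi(g(C))=\phi(C)$; here one uses that $\phi$ is injective (Lemma~\ref{l:scaled}, scaled isometry) so that $\phi(C)$ determines $C$ and the two notions of ``fixing'' match up. Finally, $\psi$ is a homomorphism because $\phi\circ(g_1g_2)=\psi(g_1)\circ\phi\circ g_2=\psi(g_1)\circ\psi(g_2)\circ\phi$ and $\phi$ is injective forces $\psi(g_1g_2)=\psi(g_1)\psi(g_2)$; and $\psi$ is injective because $\psi(g)=\mathrm{id}$ forces $\phi\circ g=\phi$, hence $g=\mathrm{id}$ by injectivity of $\phi$. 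Thus $\AutL_p(C)\cong\psi(\AutL_p(C))\le\AutM(\phi(C))$ with the stated block structure.

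The main obstacle I anticipate is the local step: proving that $\sigma_j\in\GL(k,p)$ really induces a \emph{monomial} (not merely linear) transform of each $\qp$-dimensional block. The content is that an arbitrary $\F_p$-linear bijection of $\F_q$, viewed through the simplex-code encoding $\phi$, does nothing worse than permute-and-scale the $\qp$ output coordinates. This is where the coprimality hypothesis on $k$ and $p-1$ — equivalently \eqref{eq:Fq=} and the mutual $\F_p$-independence of the powers of $\xi$ — does the work: it guarantees that the $\qp$ functionals $z\mapsto\Tr_{q/p}(z\xi^l)$ are, up to $\F_p^\times$-scalars, \emph{all} the nonzero $\F_p$-linear functionals grouped into $\F_p^\times$-orbits, so the dual action of any $\sigma_j$ must permute these orbits. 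I would present this as a short self-contained sublemma. Everything else — the block permutation, the homomorphism property, the injectivity, the invariance of $\phi(C)$ — is bookkeeping that follows formally from $\phi$ being an injective scaled isometry intertwining the two actions.
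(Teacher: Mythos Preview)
Your proposal is correct and follows essentially the same approach as the paper: decompose an element of $\AutL_p(C)$ into a coordinate permutation (handled by permuting blocks) and per-coordinate $\F_p$-linear bijections of $\F_q$, and show each of the latter is realized by a monomial transform of the corresponding simplex block. The paper phrases the key local step as the fact that any bijective linear transform of the simplex code equals a monomial automorphism (argued via the column description of the generator matrix), whereas you phrase it dually via the functionals $z\mapsto\Tr_{q/p}(z\xi^l)$ and \eqref{eq:Fq=}; these are the same content. Your extra bookkeeping on the homomorphism property and injectivity of $\psi$ is more explicit than the paper's, but not a different route.
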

\begin{proof}
The action of the monomial automorphism group on a codeword $c$ of $C$ can be represented as a permutation of the coordinates and, for every $i$ from $1$ to $n$
the action of an element of $\GL(m,p)$ on of the value 
in the $i$-{th} coordinate, understood as an element of $\F_p^m$.
We first consider these two actions separately.

(i) A permutation of coordinates corresponds to a permutation of blocks for $\phi(C)$:
we see that
$$ 
\phi(z_{\pi^{-1}(1)},
\ldots,
z_{\pi^{-n}(n)}) 
=(\phi(z_{\pi^{-1}(1)}),
\ldots,
\phi(z_{\pi^{-n}(n)})) $$
is obtained from $\phi(z_{1},\ldots,z_{n})$ by sending each $i$-{th}
block of coordinates to the place of the $\phi(i)$-{th} block.

(ii) Next, we show that the action of a 
$\F_p$-linear permutation
on the value $z_i$ in the 
$i$-{th} coordinate corresponds to
a monomial transform inside the $i$-{th} block of
$\phi(z_{1},\ldots,z_{n})$.
Here, a crucial observation if the following well-known fact.

(*) \emph{Any bijective linear transform of the Simplex code is equivalent to the action 
of some of its monomial automorphisms.} {\small This fact is straightforward from the form of 
a generator matrix of the Simplex code, which consists of a complete collection
of pair-wise linearly independent columns of height $m$. Applying a bijective linear transform
to the rows, we obtain another generator matrix with the same property.
It is obvious that one such matrix can be obtained from another one by permuting 
 columns and multiplying each column by a nonzero coefficient, 
that is, by a monomial transform.}

Consider the value $z_i$ in the $i$-{th} position 
and some permutation $\tau_i$ of $\F_q$
that is linear over $\F_p$. Since $\phi$ is also linear over $\F_p$ and invertible,
we have a bijective linear transform $ \phi(\tau_i(\phi^{-1}(\cdot))) $ of the Simplex code. 
By (*),
there is a monomial automorphism $\sigma_i$ of the Simplex code whose action on this code 
is identical 
to $ \phi(\tau_i(\phi^{-1}(\cdot))) $. 
Then, we have
$$  \phi(\tau_i(z_i)) =  \sigma_i(\phi(z_i)), $$
i.e., the action of a $\F_p$-linear permutation $\tau_i$ in the $i$-{th} coordinate 
of a word $\bar z$ from $\F_q^n$
corresponds to the action of the monomial transform $\sigma_i$ in the $i$-{th} block
of coordinates of $\phi(\bar z)$.

Summarizing (i) and (ii), we get the following.
If the mapping
$$
(z_1,\ldots,z_n) \to
\left(\tau_i(z_{\pi^{-1}(1)}),\ldots,\tau_i(z_{\pi^{-n}(n)})\right) $$
belongs to $\AutL_p(C)$, then
$$
(\phi(z_1),\ldots,\phi(z_n)) \to
\left(\sigma_i(\phi(z_{\pi^{-1}(1)})),\ldots,\sigma_i(\phi(z_{\pi^{-n}(n)}))\right) $$
maps $\phi(C)$ to itself, i.e., belongs to $\AutM(\phi(C))$.
This proves the statement.
\end{proof}

\begin{lemma}\label{l:subgroup}
Assume that $C$ is a linear code over 
$\F_q$ and $C^\perp$ 
satisfies the following property: 
\begin{enumerate}
    \item[\rm (i)] the minimum distance of $C^\perp$ is at least $4$.
\end{enumerate}
Then the groups  $\AutL_p(C)$, $\AutM(\phi(C))$, 
and $\AutM(\phi(C)^\perp)$ are isomorphic.
\end{lemma}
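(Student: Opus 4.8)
The plan is to deduce everything from the embedding of $\AutL_p(C)$ into $\AutM(\phi(C))$ that Lemma~\ref{l:aut-to-aut} already provides. So the whole task reduces to showing that this embedding is \emph{surjective}; the isomorphism $\AutM(\phi(C))\cong\AutM(\phi(C)^\perp)$ then comes for free from the standard fact (recalled earlier) that a linear code and its dual have isomorphic monomial automorphism groups. To prove surjectivity I would split the argument into two parts: (i) every $\mu\in\AutM(\phi(C))$ permutes the blocks $N_j=\{\qp(j{-}1){+}1,\dots,\qp j\}$, $j=1,\dots,n$, among themselves; and (ii) every such block-permuting $\mu$ is the image of some $\nu\in\AutL_p(C)$.

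For part (i) I would work with the generator matrix $\Phi(M)$ of $\phi(C)$ supplied by Lemma~\ref{l:GM}, where $M$ is a generator matrix of $C$, hence a parity-check matrix of $C^\perp$. The hypothesis that $C^\perp$ has minimum distance at least $4$ says precisely that any three columns of $M$ are $\F_q$-linearly independent. Writing $V_j$ for the $\F_q$-span of the $j$th column of $M$ and viewing it as a $k$-dimensional $\F_p$-subspace of $\F_p^{km}$, this translates into: any three of $V_1,\dots,V_n$ are in direct sum, and in particular $V_j\cap V_{j'}=\{0\}$ for $j\ne j'$. The decisive observation --- immediate from \eqref{eq:Fq=} --- is that the $\qp$ columns of $\Phi(M)$ lying in the block $N_j$ represent exactly the $\qp$ distinct $\F_p$-lines contained in $V_j$. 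Since a monomial automorphism $\mu$ of $\phi(C)$ is realized by a matrix $A\in\GL(km,p)$ that carries column-lines to column-lines in the pattern fixed by the coordinate permutation of $\mu$, the image $A(V_j)$ is a $k$-dimensional $\F_p$-subspace all of whose $\qp$ lines are again column-lines, hence each is contained in some $V_{j'}$. I would then show $A(V_j)$ must equal a single $V_{j'}$: if $A(V_j)$ met two of them, say in lines $\ell_1\subseteq V_{j_1'}$ and $\ell_2\subseteq V_{j_2'}$ with $j_1'\ne j_2'$ (so $\ell_1\ne\ell_2$, since $V_{j_1'}\cap V_{j_2'}=\{0\}$), then every further line $\ell_3$ of the plane $\ell_1{+}\ell_2\subseteq A(V_j)$ lies in some $V_{j_3'}$, and the three cases $j_3'=j_1'$, $j_3'=j_2'$, $j_3'\notin\{j_1',j_2'\}$ contradict, respectively, $V_{j_1'}\cap V_{j_2'}=\{0\}$, the same again, and the threefold direct-sum property (since $\ell_1,\ell_2,\ell_3$ are coplanar but would have to be independent). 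Hence the coordinate permutation of $\mu$ sends $N_j$ onto some $N_{j'}$. I expect this projective-geometry step to be the main obstacle: it is the only place the full strength of the hypothesis (three-wise, not merely pairwise, independence) is used, and the case analysis must be run carefully.

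For part (ii), once $\mu$ is known to induce a block permutation $\bar\pi$ with $\mu(N_j)=N_{\bar\pi(j)}$, its action from $N_j$ to $N_{\bar\pi(j)}$ is a monomial transform $\sigma_j$. Because $C^\perp$ having minimum distance at least $2$ forces every column of $M$ to be nonzero, the projection of $\phi(C)$ onto each block is the whole simplex code, so $\sigma_j$ is a monomial automorphism of the simplex code; by fact~(*) in the proof of Lemma~\ref{l:aut-to-aut} its restriction to the simplex code equals $\phi\circ\tau_j\circ\phi^{-1}$ for a unique $\F_p$-linear bijection $\tau_j$ of $\F_q$. I would then set $\nu(z_1,\dots,z_n)=\bigl(\tau_{\bar\pi^{-1}(1)}(z_{\bar\pi^{-1}(1)}),\dots,\tau_{\bar\pi^{-1}(n)}(z_{\bar\pi^{-1}(n)})\bigr)$, a coordinate permutation followed by $\F_p$-linear permutations of $\F_q$ satisfying $\phi\circ\nu=\mu\circ\phi$ by construction; then $\phi(\nu(C))=\mu(\phi(C))=\phi(C)$ and the injectivity of $\phi$ (Lemma~\ref{l:scaled}) give $\nu(C)=C$, so $\nu\in\AutL_p(C)$ and $\nu\mapsto\mu$ under the embedding of Lemma~\ref{l:aut-to-aut}. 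With (i) and (ii) the embedding is an isomorphism $\AutL_p(C)\cong\AutM(\phi(C))$, and composing with $\AutM(\phi(C))\cong\AutM(\phi(C)^\perp)$ finishes the proof.
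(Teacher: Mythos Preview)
Your proof is correct and follows the same overall architecture as the paper's: invoke Lemma~\ref{l:aut-to-aut} for the embedding $\AutL_p(C)\hookrightarrow\AutM(\phi(C))$, show it is onto by proving that every $\mu\in\AutM(\phi(C))$ respects the block structure and then that its within-block action comes from an $\F_p$-linear permutation of $\F_q$, and finally appeal to $\AutM(\phi(C))\cong\AutM(\phi(C)^\perp)$.

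Where you genuinely diverge is in the block-preservation step. The paper argues in the dual: it observes that the weight-$3$ codewords of $\phi(C)^\perp$ are exactly those supported inside a single block (existence from the Hamming-code structure within a block, and non-existence across blocks from $d(C^\perp)\ge 4$), so ``lying in the same block'' is an $\AutO(\phi(C)^\perp)$-invariant relation on coordinates, and hence \emph{a fortiori} $\AutM$-invariant. You instead work directly with $\phi(C)$ via the induced action $A\in\GL(km,p)$ on the column space of $\Phi(M)$: the block $N_j$ corresponds to the $k$-dimensional subspace $V_j$, and the three-wise direct-sum property of the $V_j$ (which is exactly $d(C^\perp)\ge 4$) forces $A(V_j)$ to be a single $V_{j'}$ by your projective-plane case analysis. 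The two arguments are dual to one another---three $\F_p$-dependent columns of $\Phi(M)$ is precisely a weight-$3$ word of $\phi(C)^\perp$---but yours stays on the primal side and makes the role of the hypothesis (three-wise, not merely pairwise, independence of the columns of $M$) more visibly geometric. The paper's version has the incidental advantage that its block-preservation conclusion holds for the larger group $\AutO(\phi(C)^\perp)$, not only for $\AutM$. Your part~(ii) matches the paper's use of fact~(*) and of the observation that the projection onto each block is the full simplex code (the paper's item~(v)); your justification that $\sigma_j$ is a genuine monomial automorphism of the simplex code, and the verification that $\nu\mapsto\mu$ under the embedding, are a bit more explicit than the paper's sketch.
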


\begin{proof}
We will first deduce from (i) that
\begin{enumerate}
    \item[(ii)] 
    \emph{$\AutO(\phi(C)^\perp)$ (and hence, also $\AutM(\phi(C)^\perp)$)
    acts imprimitively on the coordinates
    with the \emph{blocks}
$$\Big\{1,\ldots ,\qp\Big\},\ \ldots,\ \Big\{\qp(n-1)+1,\,\ldots ,\,\qp n\Big\}.$$}
\end{enumerate}
(As we see from Lemma~\ref{l:aut-to-aut}, this property
is indeed necessary, while (i) is only sufficient to guarantee (ii).)
To show (ii), we note two other facts. From the code distance $\ge 4$ of $C^\perp$, we see that
\begin{enumerate}
    \item[(iii)] \emph{for every codeword of weight $3$ in $\phi(C)^\perp$, 
    the three nonzero coordinates
    belong to the same block.} 
    {\small Indeed, assume that $\phi(C)^\perp$ has a codeword with
    exactly three nonzero coordinates, with values $\beta$, $\beta'$ and $\beta''$ respectively.
    This means that for some $i$, $i'$, $i''$, $j$, $j'$, $j''$ such that
    $(i,j)\ne (i',j')\ne (i'',j'')\ne (i,j)$,
    the equation
    $$ \beta L(z_{i} \xi^{j} ) + \beta' L(z_{i'} \xi^{j'} ) + \beta'' L(z_{i''} \xi^{j''} ) =0$$
    is satisfied for every $(z_1,\ldots,z_n)$ from $C$. Since $L$ is linear and nondegenerate,
    this implies
    $$ (\beta \xi^{j}) z_{i} + (\beta' \xi^{j'}) z_{i'} + (\beta'' \xi^{j''}) z_{i''}=0$$
    for all $(z_1,\ldots,z_n)$ from $C$.
    The last equation contradicts to the code distance of $C^\perp$, 
    unless $i=i'=i''$, i.e., the three coordinates
    are from the same block.} 
\end{enumerate}
On the other hand, because the dual of a Simplex code is a $1$-perfect Hamming code, we have
\begin{enumerate}
    \item[(iv)] \emph{every two coordinates
    from the same block are nonzero coordinates for
    some weight-$3$ codeword in $\phi(C)^\perp$.}
    {\small Indeed, {in the generator matrix of the Simplex code}, 
    the sum of any two columns is proportional to some other column.
    The same relation takes place 
    between the corresponding coordinates 
    from the same block of the concatenated code,
    which implies the required property.}
\end{enumerate}
From (iii) and (iv), we see that the code properties of $\phi(C)^\perp$ 
distinguish the pairs of coordinates in the same block and the pairs of coordinates in different blocks. 
Hence, the automorphism group does not break blocks and we have (ii).

Next, we observe that property (i) excludes some degenerate cases with vanishing coordinates:

\begin{enumerate}
    \item[(v)] \emph{every codeword of the Simplex code occurs in every block, 
    over the codewords of the concatenated code
    $\phi(C)$.}
    {\small Indeed, this obviously equivalent to the fact that for every symbol of $\F_q$ 
    and every position, there is a codeword of $C$ with the given symbol in the given position.
    Since $C$ is linear, it is the same as to say that there is no vanishing coordinate in $C$.
    Equivalently, $C^\perp$ does not contain a weight-$1$ codeword,
    which follows immediately from (i).
    }
\end{enumerate}

Now, we have that every automorphism 
from $\AutM(\phi(C)^\perp)$ 
permutes blocks,
permutes coordinates inside {each block},
and permutes the values of $\F_p$ for every coordinate
(fixing $0$).
The same is true for $\AutM(\phi(C))$, because it is isomorphic to $\AutM(\phi(C)^\perp)$
with the isomorphism keeping the permutation part.
According to (v), 
the last two steps (permuting coordinates inside {each block},
and permuting the values of $\F_p$ for every coordinate)
must be compatible with
the automorphism group of the Simplex code,
and result in a permutation of its codewords.
A permutation of the codewords
of the Simplex code in each block,
for the code $\phi(C)$,
corresponds to a permutation of the elements of $\F_q$
in the corresponding coordinate for $C$;
and a permutation of the blocks corresponds
to a permutation of the coordinates for $C$.
So, we see that every automorphism from $\AutM(\phi(C))$
corresponds to some automorphism of $C$:
$$ \AutM(\phi(C)) \lesssim \AutO(C).  $$
Moreover, because all considered transformations are linear over $\F_p$, 
we have
$$ \AutM(\phi(C)) \lesssim \AutL_p(C). $$
From Lemma~\ref{l:aut-to-aut},
we have the inverse correspondence.
It remains to note that $\AutM$
of a code and its dual are isomorphic.
\end{proof}

\section{Classical Kasami codes}\label{s:main}
Let $q=2^m$ for some integer $m\ge 1$.
Consider the cyclic code $M_q^\bot$
of length $q+1$ defined over $\F_q$
by the check polynomial $h(x)=(x-1)(x-\zeta)(x-\zeta^q)$,
for the root $\zeta=\alpha^{q-1}$ of order
$q+1$ over $\F_{q^2}$, where $\alpha$ is a primitive root of $\F_{q^2}$ (for some notation reasons, we first 
define $M_q^\bot$ and then $M_q$ as the dual to $M_q^\bot$).
Note that $\zeta^{q^2}=\zeta$,
so that 
$$(x-\zeta)(x-\zeta^q)=x^2-(\zeta+\zeta^q)x+\zeta^{q+1}=x^2-(\Tr_{q^2/q}\zeta)x+1 $$ 
has its coefficients in $\F_q$.
\begin{theorem} \label{th:mds}
The code $M_q^\bot$ is MDS of parameters $[q+1,3,q-1]_q$. Its weight distribution is
\begin{equation}\label{eq:mds}
 \Big[
\big\langle 0,1\big\rangle ,
\Big\langle q-1,{\frac{q^3-q}{2}}\Big\rangle ,
\big\langle q,q^2-1\big\rangle ,
\Big\langle q+1,\frac{q^3-2q^2+q}{2}\Big\rangle
\Big]. 
\end{equation}
\end{theorem}

\begin{proof}
 The BCH bound (see e.g. \cite[\S\,7.6]{MWS}) applied to $M_q$ shows that $M_q$, hence $M_q^\bot$ is MDS. The frequency of weight $q-1$ is derived on applying \cite[Corollary~5, p.\,320]{MWS}.
 $$A_{q-1}=(q-1)
 \binom{q+1}{q-1}
 =(q-1)
 \binom{q+1 }{ 2}
 . $$
  The frequency of weight $q$ is derived on applying \cite[p.\,320]{MWS}.
  $$A_q=
  \binom{q+1}{3-2}
  \big[(q^2-1)-q(q-1)\big]=(q+1)(q-1). $$
  The frequency of weight $q+1$ is derived by complementation
  $A_{q+1}=q^3-1-A_{q-1}-A_{q}. $\end{proof}

Denote by $S_q$ the binary {Simplex} code of parameters $[q-1,m,\frac{q}{2}]_2$.
The main result of this section is the following theorem.

\begin{theorem}  \label{th:2}
The concatenation of $M_q^\bot$ with $S_q$ is a binary cyclic code 
$K_q^\perp$ of parameters $[q^2-1, 3m]$, 
with nonzeros $\alpha$, $\alpha^{1+q}$, with $\alpha$ a primitive root of $\F_{q^2}$. 
\end{theorem}

\begin{proof}
As $\zeta=\alpha^{q-1}$ is a root of order $q+1$ of $\F_{q^2}$,
by the facts of Section~2.4, we have
$$ M_q^\bot=\big\{\left(c+\Tr_{q^2/q}(\delta\zeta^j)\right)_{j=1}^{q+1} \,\big|\, c\in \F_q,\, \delta\in \F_{q^2}\big\}. $$
Let $\theta=\alpha^{1+q}$. Clearly $\theta$ is a root of order $q-1$ of $\F_q$.
The concatenation map $\phi$ from $\F_q$ to $S_q$
is conveniently described by using the trace function $\Tr_{q/2}$.
For all $\beta \in \F_q$ let
\begin{IEEEeqnarray*}{rCl}
 \phi(\beta)&=&\big(\Tr_{q/2}(\beta\theta),\Tr_{q/2}(\beta\theta^2),\dots,\Tr_{q/2}(\beta\theta^{q-1})\big)\\
 &=&\big(\Tr_{q/2}(\beta\theta^i)\big)_{i=1}^{q-1}. 
\end{IEEEeqnarray*}
Replacing $\beta$ by the generic coordinate of $M_q^\bot$, we obtain an expression depending on $i$ and $j$.
$$ \Tr_{q/2}(c\theta^i)+\Tr_{q^2/2}(\delta \alpha^{i(q+1)+j(q-1)}  ). $$
By using the Chinese Remainder Theorem for integers for the product $q^2-1=(q+1)\times (q-1)$,
we see that the exponent of $\alpha$  in the second sum ranges over $1,\dots,q^2-1$ modulo $q^2-1$.
This is the Trace formula for a cyclic code with nonzeros $\alpha$, $\theta=\alpha^{1+q}$.
\end{proof}

\begin{corollary}
 The weight distribution of $K_q^\perp$ is
\begin{multline*}
 \Big[
\big\langle 0,1\big\rangle ,\ 
\Big\langle \frac{q^2-q}{2},q\frac{(q^2-1)}{2}\Big\rangle ,\  \\
\Big\langle \frac{q^2}{2},q^2-1\Big\rangle ,\ 
\Big\langle \frac{q^2+q}{2},\frac{q^3-2q^2+q}{2}\Big\rangle
\Big]. 
\end{multline*}
\end{corollary}

\begin{proof}
Concatenation with $S_q$ multiplies the weights of $M_q^\bot$ by a factor $\frac{q}{2}$, 
while leaving their frequencies unchanged.
\end{proof}

\begin{remark}
This result is derived in \cite[Th.~32, p.\,252]{MWS} 
by the complex formula of \cite[Ch.~6, Th.~2]{MWS}.
\end{remark}

\begin{corollary}
 The coset graphs of $K_q$ and $M_q$ are isomorphic. 
In particular \cite{BRZ:2015}, $K_q$ is a completely regular code of diameter $3$ and intersection array
\begin{equation}\label{eq:iaq}
\{q^2 - 1, q(q - 1), 1; 1, q, q^2 - 1\}.
\end{equation}
\end{corollary}

\begin{proof}
The first claim holds by Lemma~\ref{l:coset-graph}, 
where $C=M_q$, $C^\perp=M_q^\bot$,
$\phi(C)=K_q^\perp$, and $\phi(C)^\perp=K_q$.
Since $M_q$ is a completely regular code with intersection array 
\eqref{eq:iaq}, see \cite[F.5]{BRZ:CR},
the coset graphs are distance regular and the code $K_q$ 
is completely regular with the same intersection array. 
\end{proof}

In the rest of this section, we find the automorphism groups of $K_q$ and related codes.

\begin{remark}
The concatenation construction for the special case $q=4$ of the code $K_q$ 
was suggested in \cite[Sect.~4.1]{BRZ:2018},
without considering the cyclicity.
\end{remark}

\subsection{Automorphism groups}\label{s:AutK}

According to Lemma~\ref{l:subgroup}, to know the automorphism group of $K_q$,
we have to find $\AutL_2(M_q^\bot)$. We first prove that it coincides with $\AutS(M_q^\bot)$.

\begin{lemma}\label{l:A(M)}
The automorphism groups of the linear $[q+1,3,q-1]_q$ MDS code $M_q^\bot$ satisfy
$$\Aut0(M_q^\bot)= \AutL_2(M_q^\bot) =\AutS(M_q^\bot).$$
\end{lemma}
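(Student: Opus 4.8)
The inclusions $\AutM(M_q^\bot)\le \AutS(M_q^\bot)\le \AutL_2(M_q^\bot)\le \Aut0(M_q^\bot)$ hold in general (as noted after the definition of $\AutL_p$), so the plan is to prove the reverse inclusion $\Aut0(M_q^\bot)\le \AutS(M_q^\bot)$, i.e., that every automorphism of $M_q^\bot$ fixing $\bar 0$ is semilinear. The key fact I would invoke is that the $q$-ary code $M_q^\bot$ is MDS of dimension $3$, hence its dual $M_q$ is an MDS code of dimension $q-2$, equivalently a $[q+1,q-2,4]_q$ code; thus $M_q^\bot$ has dual distance $4$. I want to use that large dual distance, together with the small dimension $3$, to force any coordinate permutation underlying an automorphism to be realized by a monomial-plus-field map.

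First I would recall the standard description of the monomial (and semilinear) automorphism group of a generalized Reed--Solomon / MDS code of dimension $3$: up to monomial equivalence the generator matrix of $M_q^\bot$ has columns $(1,t_i,t_i^2)^{\mathrm T}$ (a normal rational curve / conic in $\mathrm{PG}(2,q)$) possibly together with points at infinity, and the subgroup of $\mathrm{P\Gamma L}(3,q)$ stabilizing this point set is $\mathrm{P\Gamma L}(2,q)$ acting on the $q+1$ points of the conic; this yields $|\AutS(M_q^\bot)|$ explicitly and, in particular, $\AutS(M_q^\bot)$ is transitive (indeed sharply $3$-transitive up to the field part) on coordinates. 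Then the core step is: show that an arbitrary $\Aut0$-automorphism $\sigma\circ\pi$ is semilinear. Because the dual distance of $M_q^\bot$ is $4$, no $3$ coordinates of $M_q^\bot$ are linearly dependent, so any $3$ columns of a generator matrix are a basis of $\F_q^3$; an automorphism fixing $\bar 0$ is an isometry, hence it is additive on each coset structure in a way that lets me argue, coordinate-triple by coordinate-triple, that the induced maps on the alphabet must respect the $\F_q$-linear relations among columns. Concretely, I would fix three coordinates, use transitivity of $\AutS$ to normalize, and then show the residual alphabet maps $\sigma_i$ are forced to be a common field automorphism composed with scalar multiplications — exactly the data of a semilinear map.

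The main obstacle I anticipate is the passage from ``$\sigma\circ\pi$ preserves the code'' to ``each $\sigma_i$ is affine/semilinear on $\F_q$'': a priori the $\sigma_i$ are arbitrary permutations of $\F_q$ fixing $0$, and one must exploit the abundance of low-weight dual codewords (the weight-$4$ words of $M_q$, which give $\F_q$-linear dependencies among every $4$ columns) to pin them down. The cleanest route is probably: (1) from any $4$ columns being in ``general position'' and the induced relation $a_1 z_{i_1}+a_2 z_{i_2}+a_3 z_{i_3}+a_4 z_{i_4}=0$ on $M_q$, deduce a functional equation $\sigma_{i_4}(\,\cdot\,)$ is $\F_q$-affine in the other three arguments; (2) conclude all $\sigma_i$ agree up to scaling with a single additive bijection of $\F_q$; (3) an additive bijection of $\F_q$ commuting with multiplication by the relevant structure constants is a field automorphism. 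Then $\Aut0(M_q^\bot)\le\AutS(M_q^\bot)$, and combined with the general inclusions the three groups coincide. I would also remark that this is essentially the classical statement that the automorphism group of a (narrow-sense, genuinely) MDS code of dimension $3$ over $\F_q$ is $\mathrm{P\Gamma L}(2,q)$ extended by scalars, so one may alternatively just cite the relevant literature on automorphisms of MDS codes and specialize.
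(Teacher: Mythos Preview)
Your outline heads in a reasonable direction but leaves the central technical content as a gap. You correctly isolate the problem as showing that each alphabet permutation $\sigma_i$ is a scalar times a field automorphism, yet your steps (1)--(3) are only slogans. The phrase ``an automorphism fixing $\bar 0$ is an isometry, hence it is additive on each coset structure'' is not an argument for the additivity of the $\sigma_i$; that additivity is precisely what must be proved. From a single weight-$4$ word of $M_q$ you would obtain a relation of the shape $a_1\sigma_{j_1}(u)+a_2\sigma_{j_2}(v)+a_3\sigma_{j_3}(w)+a_4\sigma_{j_4}(\ell(u,v,w))=0$ for all $u,v,w\in\F_q$ with $\ell$ a fixed linear form, and you have not explained how to extract from such relations either additivity of the $\sigma_i$ or the subsequent power-map structure. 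Invoking the conic picture and $\mathrm{P\Gamma L}(2,q)$ describes $\AutS(M_q^\bot)$, which is not in doubt; it does not by itself bound $\Aut0(M_q^\bot)$.

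The paper's proof takes a different and fully explicit route, working with minimum-weight (weight $q{-}1$) codewords of $M_q^\bot$ itself rather than with dual weight-$4$ words or the projective geometry, following ideas of Gorkunov. It proceeds in four short steps. (i)~Given $a,b\in\F_q$ and a coordinate $i$, choose weight-$(q{-}1)$ codewords $\bar y,\bar z\in\pi M_q^\bot$ whose zero pairs are disjoint and whose $i$-th entries are $a$ and $b$; then $\sigma(\bar y+\bar z)$ and $\sigma(\bar y)+\sigma(\bar z)$ both lie in $M_q^\bot$ and agree on the four zero positions, so by the minimum distance they are equal, yielding $\sigma_i(a+b)=\sigma_i(a)+\sigma_i(b)$. (ii)~For a weight-$(q{-}1)$ codeword $\bar y$ with first entry $1$, compare $\sigma_1(a)\,\sigma(\bar y)$ with $\sigma_1(1)\,\sigma(a\bar y)$; they agree in the first coordinate and have the same support of size $q{-}1$, hence coincide, giving $\sigma_1(a)\sigma_i(b)=\sigma_1(1)\sigma_i(ab)$ (with a short extra argument to cover all $b$). (iii)~It follows that $\sigma_i(a)=\sigma_i(1)\,a^d$ for a single exponent $d$. (iv)~By (i) and (iii), $a\mapsto a^d$ is additive and multiplicative, hence a field automorphism, so $\sigma\circ\pi\in\AutS(M_q^\bot)$. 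This is elementary and self-contained; your projective-geometry route would ultimately either reproduce comparable computations or defer to the literature.
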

\begin{proof}
The proof applies ideas from \cite{Gor:2010} for different code parameters.
Assume that $\pi$ is a permutation of coordinates and 
$\sigma=(\sigma_1,\ldots,\sigma_n)$ is a collection
of permutations of $\F_q$ fixing $0$
such that $\sigma\circ \pi :(v_1,\ldots,v_n) \to
(\sigma_1(v_{\pi^{-1}(1)}),\ldots, \sigma_n(v_{\pi^{-1}(n)}) )$ 
belongs
to $\AutO(M_q^\bot)$. 
Denote by $\pi M_q^\bot$ the code whose codewords are obtained
from the codewords of $M_q^\bot$
by applying the coordinate permutation $\pi$.
So, $\sigma(\pi M_q^\bot)=M_q^\bot$.

 We split the proof into four steps.
The first step is sufficient to see the first equality
 in the claim of the lemma.

(i) 
\emph{We state that for every coordinate $i$ and every $a,b\in \F_q$
it holds $\sigma_i(a+b)=\sigma_i(a)+\sigma_i(b)$.
Equivalently, $\sigma\circ \pi \in \AutL_2(M_q^\bot)$.}
Indeed, consider two codewords $\bar y,\bar z\in \pi M_q^\bot$ of minimum weight, $q-1$,
with zeros in different positions and having $a$ and $b$ in the $i$-{th} position.
Consider the two words $\bar u=\sigma(\bar y+\bar z)$ and $\bar v=\sigma(\bar y)+\sigma(\bar z)$.
They both lie in $M_q^\bot$. 
Moreover, they coincide in every position where $\bar y$ or $\bar z$ has $0$.
Since $\bar y$ has $0$ in two positions and $\bar z$ has $0$ in another pair of positions,
$\bar u$ or $\bar v$ differ in less than $q-1$ positions. As the code distance is $q-1$,
we have $\bar u=\bar v$. In particular, their values $\sigma_i(a+b)$ and $\sigma_i(a)+\sigma_i(b)$ 
in the $i$-{th} coordinate coincide, which proves (i).

(ii) 
\emph{We state that for any  coordinate $i$
and for any values
$a$ and $b$ from $\F_q$, it holds
\begin{equation}\label{eq:propor}
 {\sigma_1(a)}{\sigma_i(b)}=
   {\sigma_1(1)}{\sigma_i(a b)}.
\end{equation}
}
We first assume that $i=2$ and 
there is a weight-$(q-1)$ codeword $\bar y=(1,b,y_3,\ldots,y_n)$ of $\pi M_q^\bot$.
Since the codes $M_q^\bot$ and $\pi M_q^\bot$ are linear and $\sigma\circ\pi$ is an automorphism of $M_q^\bot$,
the words
\begin{IEEEeqnarray*}{rCl}
\sigma (\bar y) 
&=&\left(\sigma_1(1),\sigma_2(b),\sigma_3(y_{3}),\ldots, \sigma_n(y_{n}) \right) \quad \mbox{and}\\
\sigma (a\bar y)
&=&\left(\sigma_1(a),\sigma_2(ab),\sigma_3(a y_{3}),\ldots, \sigma_n(a y_{n}) \right)
\end{IEEEeqnarray*}
are codewords of $M_q^\bot$, 
as well as the words $\bar u=\sigma_1(a) \sigma(\bar y)$ and $\bar v=\sigma_1(1) \sigma(a \bar y)$.
The codewords $\bar u$ and $\bar v$ both have weight $0$ or $q-1$  and the same nonzero positions;
moreover, they coincide in the first coordinate 
(with the value $\sigma_1(1)\sigma_1(a)$).
It follows that the distance between $\bar u$ and $\bar v$ is smaller than the minimum distance
of $M_q^\bot$, and hence $\bar u = \bar v$. In particular, the values $\sigma_1(a)\sigma_2(b)$ 
and $\sigma_1(1)\sigma_2(ab)$ in the second position of $\bar u$ and $\bar v$ coincide, 
which proves \eqref{eq:propor}.

In the next step, we still consider $i=2$.
A straightforward and well-known property of an MDS code of dimension $k$ (and distance $n-k+1$)
is that there is exactly one codeword having chosen values in chosen $k$ distinct coordinates.
By $\bar y^{(l)}$, $l=2,\ldots,n-1$,
we denote the unique codeword of $\pi M_q^\bot$ 
with $1$ in the first coordinate
and $0$ in the $l$-{th} and $n$-{th} coordinates.
For different $l$ and $t$, the words $\bar y^{(l)}$ and $\bar y^{(t)}$ are different
(indeed, $\bar y^{(l)}$ cannot have $0$ in the $t$-{th} position
because its weight cannot be less than the code distance $q-1$).
Hence, they are different in all positions except the first and the last
(the distance cannot be less than $q-1$).
So, $\bar y^{(l)}$, $l=2,\ldots,n-1$, possess $n-2=q-1$ different values 
in the second position. Therefore, for any given $a$,
\eqref{eq:propor} holds for $q-1$ distinct values $b\in F_q$.
 It immediately follows from the bijectivity of $\sigma_i$ that 
 \eqref{eq:propor} holds for the remaining value of $b$.
 So, \eqref{eq:propor} is true for $i=2$ for any $a$ and $b$.
Similarly, it is true for any $i\ne 1$.

The remaining case $i=1$ is derived from the case $i\ne 1$. 
From \eqref{eq:propor} we see that the ratio 
$\displaystyle\frac{\sigma_i(ba)}{\sigma_i(b)}$ 
does not depend on $b$.
By similarity, this holds for any $i$ including $i=1$. So, we have
$\displaystyle\frac{\sigma_1(ba)}{\sigma_1(b)}=\frac{\sigma_1(a)}{\sigma_1(1)}$, 
which is \eqref{eq:propor} with $i=1$. (i) is proven.

(iii) \emph{We state that there is an integer degree $d$ such that
\begin{equation}\label{eq:pow}
\sigma_i(a)=\sigma_i(1)a^d
\end{equation}
for every $i$ from $1$ to $n$ and every $a$ in $\F_q$.}

Indeed, assume $\beta$ is a primitive element.
From \eqref{eq:propor}
we have
$
\sigma_i(a b)=\sigma_i(a)(\sigma_1(b)/\sigma_1(1))
$.
Substituting $b=\beta$, $a=\beta^k$
and 
denoting $d=\log_{\beta} (\sigma_1(\beta)/\sigma_1(1))$,
we get 
$\sigma_i(\beta^{k+1})=\sigma_i(\beta^{k})\cdot \beta ^d$,
which proves \eqref{eq:pow} by induction on $k$.

(iv) \emph{The mapping $a\to a^d$, where $d$ is from (iii), 
is an automorphism of the field $\F_q$.} 
Indeed, this mapping preserves the multiplication:
$(a\cdot b)^d = a^d \cdot b^d$. By (i), it preserves the addition as well.

So, the action of every automorphism $\sigma\circ\pi$ from $\AutO(M_q^\bot)$ can be represented as the combination
of a coordinate permutation, applying an automorphism of $\F_q$ to the values in all coordinates,
and multiplying the values of every coordinate by a constant. By the definition,
such automorphism belongs to $\AutS(M_q^\bot)$.
\end{proof}

To find $\AutS(M_q^\bot)$, we consider the isomorphic group $\AutS(M_q)$.
The next two lemmas
establish the tight upper and lower bounds for this group, respectively.

\begin{lemma}\label{l:AutMqperp}
The semilinear automorphism group $\AutS(M_q)$ of the code $M_q$ 
coincides with $\AutO(M_q)$ (and, in particular, with the intermediate $\AutL_2(M_q)$)
and is isomorphic to a subgroup of the general semilinear group $\GGL(2,q)$. 
\end{lemma}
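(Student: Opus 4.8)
For the statement of Lemma~\ref{l:AutMqperp}, the plan is to prove two things: (A) the equalities $\AutO(M_q)=\AutL_2(M_q)=\AutS(M_q)$, and (B) the embedding $\AutS(M_q)\lesssim\GGL(2,q)$.

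For (A), since one always has $\AutS(M_q)\le\AutL_2(M_q)\le\AutO(M_q)$, it suffices to prove $\AutO(M_q)\le\AutS(M_q)$. I would obtain this by running the four-step argument of Lemma~\ref{l:A(M)} on $M_q$, which is itself an MDS code, of parameters $[q+1,q-2,4]_q$; the minimum distance $q-1$ used there is now replaced by $4$. The multiplicativity-type steps (ii)--(iv) adapt routinely, as they only use the existence and uniqueness for an MDS code of low-weight codewords with prescribed values in a few coordinates, together with the fact that $M_q$ is a generalized Reed--Solomon code (so the needed weight-$4$ codewords, their $4$-element supports and normalized values, are available by Lagrange interpolation). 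The delicate point is the additivity step (i): here one picks two weight-$4$ codewords $\bar y,\bar z$ of $\pi M_q$ carrying the prescribed values $a$ and $b$ in a fixed coordinate $i$ and whose $(q{-}3)$-element zero sets lie inside, and together cover, a common $(q{-}1)$-element set of coordinates; this is possible precisely when $2(q-3)\ge q-1$, i.e.\ for $q\ge 8$, and then $\sigma(\bar y+\bar z)$ and $\sigma(\bar y)+\sigma(\bar z)$, both lying in $M_q$ and agreeing on $q-1$ coordinates, must coincide, giving $\sigma_i(a+b)=\sigma_i(a)+\sigma_i(b)$. The two remaining cases $q\in\{2,4\}$ are trivial, since over $\F_2$ and $\F_4$ every permutation of the alphabet fixing $0$ is already semilinear.

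For (B), since $\AutS(C)$ and $\AutS(C^\perp)$ are isomorphic with the same permutation part, it is enough to embed $\AutS(M_q^\perp)$. In the trace model $M_q^\perp=\{(c+\Tr_{q^2/q}(\delta\zeta^j))_{j}\mid c\in\F_q,\ \delta\in\F_{q^2}\}$, I single out the $\F_q$-linear subcode $W=\{(\Tr_{q^2/q}(\delta\zeta^j))_{j}\mid\delta\in\F_{q^2}\}$; since $\zeta$ generates $\F_{q^2}$ over $\F_q$, the parametrization by $\delta$ is injective and $W$ is a $[q+1,2,q]_q$ MDS code of full support. The key point is that $W$ is characterized intrinsically inside $M_q^\perp$: a codeword $(c+\Tr_{q^2/q}(\delta\zeta^j))_j$ has weight exactly $q$ if and only if it is a nonzero element of $W$. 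Indeed, if $\delta=0$ its weight is $0$ or $q+1$; if $\delta\ne0$ its zeros are the roots of $\delta x^2+cx+\delta^q$ lying in $\mu_{q+1}$, and since the product $\delta^{q-1}$ of the two roots has norm $1$, these roots lie in $\mu_{q+1}$ simultaneously or not at all, while a double root (which forces $c=0$) always lies in $\mu_{q+1}$ — so the weight is $q-1$ or $q+1$ unless $c=0$, in which case it is $q$. Since code automorphisms preserve Hamming weight, each automorphism of $M_q^\perp$ maps $W\setminus\{0\}$, hence $W$, onto itself, and restriction gives a homomorphism $\AutS(M_q^\perp)\to\GGL(W)\cong\GGL(2,q)$ (the restriction to an $\F_q$-subspace of a monomial transform composed with a field automorphism is again $\F_q$-semilinear). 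This homomorphism is injective: if an automorphism fixes $W$ pointwise then, $W$ being MDS of full support, no two of its coordinate evaluations vanish simultaneously, which forces first the coordinate permutation, then the diagonal part, then the field automorphism of that automorphism to be trivial.

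The hard part is (A): unlike (B), it cannot be read off from Lemma~\ref{l:A(M)}, because the property $\AutO=\AutS$ is not inherited under passing to the dual code, so the MDS argument genuinely has to be redone for $M_q$, where the small minimum distance $4$ is exactly what restricts the additivity step to $q\ge 8$ and forces the (easy) separate treatment of $q\in\{2,4\}$. Step (B) is comparatively clean once one notices the weight-$q$ subcode $W$.
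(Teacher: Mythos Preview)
Your route is genuinely different from the paper's. The paper does \emph{not} rerun the four-step MDS argument on $M_q$; instead it observes that $M_q$ sits inside the cyclic Hamming code $H_q$ of parameters $[q+1,q-1,3]_q$, uses the intersection array of $M_q$ to show that $H_q$ is the \emph{unique} distance-$3$ code of size $q^{q-1}$ containing $M_q$, and concludes that every $\sigma\circ\pi\in\AutO(M_q)$ must also stabilize $H_q$. Since $\AutO(H_q)=\AutS(H_q)\simeq\GGL(2,q)$ is already known (Gorkunov; Huffman), this single containment gives both (A) and (B) at once. Your part~(B) via the weight-$q$ subcode $W$ is correct and pleasant; the quadratic-roots argument showing that $W\setminus\{0\}$ is exactly the set of weight-$q$ words is right, and injectivity of the restriction map is fine.

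Your part~(A), however, has two real soft spots. First, the ``$q\in\{2,4\}$ trivial'' clause is not a valid shortcut for $q=4$: knowing that each individual $\sigma_i$ is of the form $x\mapsto c_i x^{d_i}$ with $d_i\in\{1,2\}$ does \emph{not} place $\sigma\circ\pi$ in $\AutS$, because the definition of $\AutS$ requires a \emph{single} field automorphism acting uniformly on all coordinates. You still need something like steps (ii)--(iii) to force all $d_i$ equal. (In fact your step (i) already works for $q\ge 4$: you only need $|S_{\bar y}\cap S_{\bar z}|\le 3$, which is just $S_{\bar y}\ne S_{\bar z}$, not the stronger zero-set covering condition giving $q\ge 8$.) Second, the claim that (ii)--(iv) ``adapt routinely'' hides the only non-formal point: in Lemma~\ref{l:A(M)} the coverage of all values $b$ in the second coordinate is forced by a pigeonhole using that any two of the $q-1$ minimum-weight words with prescribed first and last coordinates differ in all but two places; for $M_q$ with minimum distance $4$ this distinctness argument no longer yields all of $\F_q^\times$ directly, so you need an explicit check (e.g.\ via the conic/extended-RS description of $M_q^\perp$) that for every pair of positions $1,i$ and every $b\in\F_q^\times$ there is a weight-$4$ codeword with $y_1=1$, $y_i=b$. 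This is true, but it is not free, and calling $M_q$ a ``generalized Reed--Solomon code'' over $\F_q$ is at best shorthand for ``monomially equivalent to a doubly-extended RS code''.
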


\begin{proof}
Denote by $H_q$ the Hamming $[q+1,q-1,3]_q$ code 
defined by the generator polynomial $h(x)=(x-\zeta)(x-\zeta^q)$.
Note that $M_q$ is a distance-$4$ subcode of $H_q$.
From the intersection array \eqref{eq:iaq}
of the completely regular code
$M_q$, we find that the number
of vertices at distance at least $3$ from $M_q$
is 
$$ |M_q|\cdot\frac{q^2-1}{1}\cdot\frac{q(q-1)}{q}\cdot\frac{1}{q^2-1}=(q-1)|M_q|,$$
i.e., coincides with $|H_q \backslash M_q|$.
It follows that $H_q$
is a unique distance-$3$ code of size $q^{q-1}$ that includes $M_q$.
It follows that any automorphism $\pi$ from $\AutO(M_q)$ also belongs to $\AutO(H_q)$ 
(otherwise, $\pi(H_q)$ is another distance-$3$ code including $M_q$).
It is known that $\AutS(H_q)$ is isomorphic to $\GGL(2,q)$ \cite[Thm 7.2]{Huffman98}
and, moreover, $\AutO(H_q)=\AutS(H_q)$ \cite{Gor:2010}.
So, all automorphisms from $\AutO(M_q)$ are also semilinear, 
and we get
$
\AutO(M_q)=\AutS(M_q) \le \AutO(H_q)=\AutS(H_q)\simeq \GGL(2,q)
$.
\end{proof}

In \cite[Prop. 3.3]{BRZ:2015} (where their $C^{(u)}$ is our $K_q$),
it is shown that $\AutO(K_q)$ contains a subgroup isomorphic
to the general linear group $\GL(2,q)$. 
The proof of the following lemma pirates arguments from \cite{BRZ:2015},
adopting them to the $q$-ary code $M_q$, instead of the binary $K_q$,
and extending from $\GL(2,q)$ to  $\GGL(2,q)$.

\begin{lemma}\label{l:AutMqperp_}
The semilinear automorphism group $\AutS(M_q)$ of the code $M_q$ 
contains a subgroup of size $|\GGL(2,q)|$. 
\end{lemma}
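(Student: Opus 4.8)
The goal is to exhibit an explicit action of (a group isomorphic to) $\GGL(2,q)$ on the coordinates and alphabet of $M_q$ preserving the code. I will use the trace description of $M_q^\perp$ and dualize. Recall from the proof of Theorem~\ref{th:mds} that
\[
 M_q^\perp=\big\{\left(c+\Tr_{q^2/q}(\delta\zeta^j)\right)_{j=1}^{q+1} \,\big|\, c\in \F_q,\ \delta\in \F_{q^2}\big\},
\]
where $\zeta=\alpha^{q-1}$ has order $q+1$, so that the coordinates are naturally indexed by the $(q+1)$-st roots of unity $\mu_{q+1}\subset\F_{q^2}^\times$, i.e.\ by the points of the projective line $\mathrm{PG}(1,q)$ after identifying $\zeta^j\leftrightarrow[\zeta^j:1]$ or, more symmetrically, by the $1$-dimensional $\F_q$-subspaces of $\F_{q^2}$. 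The key structural fact is that $M_q^\perp$, up to the fixed functional $c$, is the image of $\F_{q^2}$ under $\delta\mapsto(\Tr_{q^2/q}(\delta x))_{x\in\mu_{q+1}}$, so that $\GL(2,q)$ acting $\F_q$-linearly on $\F_{q^2}\cong\F_q^2$ permutes the lines $\langle\zeta^j\rangle$ (hence the coordinates, via the induced action on $\mathrm{PG}(1,q)$) and simultaneously rescales $\delta$; and $\mathrm{Gal}(\F_{q^2}/\F_q)$, i.e.\ $z\mapsto z^q$, does likewise, giving the extra factor turning $\GL$ into $\GGL$. So the plan is: (1) make precise the identification of coordinates with lines of $\F_q^2$; (2) show each $g\in\GGL(2,q)$ induces an automorphism of $M_q^\perp$ (equivalently of $M_q$); (3) check the map $g\mapsto$ this automorphism has image of size $|\GGL(2,q)|$.

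\textbf{Step 1: coordinates as projective points.} Identify $\F_{q^2}$ with $\F_q^2$ as an $\F_q$-vector space. Each coordinate $j\in\{1,\dots,q+1\}$ corresponds to the root of unity $\zeta^j$, and the $q+1$ elements $\zeta^1,\dots,\zeta^{q+1}$ lie in pairwise distinct $1$-dimensional $\F_q$-subspaces of $\F_{q^2}$ (since $\xi$ has order $q+1$ coprime to $q-1$, cf.\ the discussion around \eqref{eq:Fq=} with $p=q$, $k=2$); indeed every line meets $\mu_{q+1}$ in exactly one point, because $|\mu_{q+1}|=q+1=|\mathrm{PG}(1,q)|$. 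So coordinates $\leftrightarrow$ points of $\mathrm{PG}(1,q)\leftrightarrow$ lines of $\F_q^2$.

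\textbf{Step 2: the action.} Let $g\in\GL(2,q)$ act on $\F_{q^2}=\F_q^2$; it permutes the $q+1$ lines, hence induces a permutation $\pi_g$ of the coordinates. For a coordinate $j$ with line $\langle\zeta^j\rangle$, we have $g(\zeta^j)=\lambda_{g,j}\,\zeta^{\pi_g(j)}$ for a nonzero scalar $\lambda_{g,j}\in\F_q^\times$. Then for any $\delta\in\F_{q^2}$ and the transposed-inverse adjoint $\hat g$ of $g$ (defined so that $\Tr_{q^2/q}(\hat g(\delta)\,g(x))=\Tr_{q^2/q}(\delta x)$ for all $x$; concretely $\hat g$ is $g$ composed with the $\F_q$-bilinear-form adjustment coming from $(x,y)\mapsto\Tr_{q^2/q}(xy)$), one computes
\[
 \Tr_{q^2/q}\!\big(\hat g(\delta)\,\zeta^{\pi_g(j)}\big)
 =\Tr_{q^2/q}\!\big(\hat g(\delta)\,\lambda_{g,j}^{-1} g(\zeta^j)\big)
 =\lambda_{g,j}^{-1}\,\Tr_{q^2/q}\!\big(\delta\,\zeta^{j}\big).
\]
Hence the coordinate permutation $\pi_g^{-1}$ together with the diagonal rescaling by $\lambda_{g,j}$ in coordinate $j$, i.e.\ a monomial transform, sends the codeword indexed by $(c,\delta)$ of $M_q^\perp$ to the codeword indexed by $(c',\hat g(\delta))$ for a suitable $c'\in\F_q$ — here one must check that the constant term $c$ maps to a constant term, which holds because the all-ones vector is fixed by every coordinate permutation and the monomial scalars must then be absorbed; more carefully, one verifies directly that $\big(c\big)_{j}$ composed with the monomial transform lands in the span of the all-ones vector together with the $\Tr$-part, using that $\sum\lambda_{g,j}^{-1}\zeta^{\cdot}$ adjustments are themselves of trace form. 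Thus each $g\in\GL(2,q)$ yields an element of $\AutM(M_q^\perp)$, hence of $\AutS(M_q^\perp)\cong\AutS(M_q)$. Adjoining the field automorphism $\delta\mapsto\delta^q$ (which permutes $\mu_{q+1}$, fixes $\F_q$ pointwise, and commutes with $\Tr_{q^2/q}$) multiplies the group by the Galois factor, giving the full $\GGL(2,q)$.

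\textbf{Step 3: faithfulness / size count.} It remains to see the resulting homomorphism $\GGL(2,q)\to\AutS(M_q)$ is injective, so the image has size $|\GGL(2,q)|$. The kernel consists of $g$ inducing the trivial coordinate permutation and trivial monomial part; such $g$ fixes every line of $\F_q^2$, hence is a scalar $\lambda I$, and the monomial scalars $\lambda_{g,j}$ are then all equal to $\lambda$, which acts on the $\Tr$-part as $\delta\mapsto\lambda^{-1}\delta$ — nontrivial unless $\lambda=1$ (a scalar $\lambda\ne 1$ changes the codeword, since $M_q^\perp$ is not scalar-invariant as a \emph{pointed} code fixing $0$... — one checks the induced automorphism is the identity on $M_q^\perp$ iff $\lambda=1$; alternatively, quotient by the scalars and note the Galois part is detected on $\mu_{q+1}$). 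Combining with Lemma~\ref{l:AutMqperp}, which gives the reverse inequality $\AutS(M_q)\le\GGL(2,q)$, yields equality; but the present lemma only needs the lower bound, so Step~3's injectivity argument suffices.

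\textbf{Main obstacle.} The delicate point is Step~2: verifying that the monomial transform built from $g$ actually stabilizes $M_q^\perp$ on the nose — in particular that the constant component $c$ is carried to a constant component rather than leaking into the trace part — and pinning down the scalars $\lambda_{g,j}$ so that the bookkeeping is consistent. This is exactly where the argument "pirates" the computation of \cite{BRZ:2015} for $K_q$ and must be redone for the $q$-ary code $M_q$ (equivalently $M_q^\perp$); it is a direct but slightly fiddly linear-algebra verification using the nondegeneracy of the trace form $\Tr_{q^2/q}$ and the fact that $g$ and its adjoint $\hat g$ relate the two generator descriptions of $M_q^\perp$.
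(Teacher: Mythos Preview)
Your strategy is the paper's: act by $\GGL(2,q)$ on $\F_{q^2}\cong\F_q^2$, permuting the lines $\langle\zeta^j\rangle$ and hence the coordinates, via the map sending $\bar e^{(j)}$ to $\lambda_{g,j}\bar e^{(\pi_g(j))}$ with $g(\zeta^j)=\lambda_{g,j}\zeta^{\pi_g(j)}$. But two steps are genuinely incomplete. First, the constant term: you flag it as the main obstacle, but the mechanism you name (trace nondegeneracy, ``adjustments are themselves of trace form'') is not what resolves it. The actual key is the norm relation $\zeta^j(\zeta^j)^q=1$, equivalently
\[
\gamma_{j,0}^2+\alpha^{q+1}\gamma_{j,1}^2+(\alpha+\alpha^q)\,\gamma_{j,0}\gamma_{j,1}=1 .
\]
The paper uses it on the $M_q$ side to replace the linear equation $\sum_j c_j=0$ (given $\sum_j c_j\zeta^j=0$) by the quadratic one $\sum_j c_j^2\gamma_{j,0}\gamma_{j,1}=0$, which \emph{is} visibly preserved by $\psi$ after a $\GL(2,q)$ change of the pair $(\gamma_{j,0},\gamma_{j,1})$. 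In your dual picture the same identity is what lets you write $\lambda_{g,j}^2=g(\zeta^j)\cdot g(\zeta^j)^q$, a quadratic expression in $(\gamma_{j,0},\gamma_{j,1})$, as an $\F_q$-combination of $1,\gamma_{j,0}^2,\gamma_{j,1}^2$, forcing $(\lambda_{g,j})_j\in M_q^\perp$. This substitution is the whole content of the lemma and cannot be waved away as routine linear algebra.

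Second, the Galois factor is wrong: the map $\delta\mapsto\delta^q$ on $\F_{q^2}$ is $\F_q$-linear (it fixes $\F_q$ pointwise, as you yourself note), so it already lies in $\GL(2,q)$; adjoining it adds nothing. To reach $\GGL(2,q)$ you must adjoin the automorphisms of the base field $\F_q$ over $\F_2$, i.e.\ $x\mapsto x^t$ with $t$ a power of $2$, acting on the code values --- this is the paper's parameter $t$ in $\Psi$, and the quadratic equation must then be re-verified in the twisted form~\eqref{eq:cggt}.
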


\begin{proof}
(i) In the first part of the proof, we consider different equations defining $M_q$.
 For each $i$, we express $\zeta^i$ as $\gamma_{i,0}  + \gamma_{i,1} \alpha$,
where $\gamma_{i,0}, \gamma_{i,1} \in \F_q$.
The code $M_q$ consists of all $\bar c=(c_{1},\ldots,c_{q+1})$ from $\F_q^{q+1}$
 such that
 \begin{IEEEeqnarray}{rCl}
 \label{eq:c}
 \sum_{i=1}^{q+1} c_i        &=& 0, 
 \\
 \label{eq:cg}
 \sum_{i=1}^{q+1} c_i \zeta^i &=& 0, \mbox{ equivalently, }
 \sum_{i=1}^{q+1} c_i \gamma_{i,j}=0,\ j=0,1.
\end{IEEEeqnarray}
Assuming that \eqref{eq:cg} holds, we express the left part of \eqref{eq:c} as follows.
\begin{IEEEeqnarray*}{rCl}
 \Big(\sum_{i=1}^{q+1} c_i\Big)^2
 &=&
 \sum_{i=1}^{q+1} c_i^2 (\zeta^i)^0
 =
 \sum_{i=1}^{q+1} c_i^2 (\zeta^i)(\zeta^i)^q \\
 &=&
 \sum_{i=1}^{q+1} c_i^2 (\gamma_{i,0}  + \gamma_{i,1} \alpha)(\gamma_{i,0}  + \gamma_{i,1}\alpha)^q \\
 &=& 
 \sum_{i=1}^{q+1} c_i^2 (\gamma_{i,0}  + \gamma_{i,1} \alpha)(\gamma_{i,0}  + \gamma_{i,1}\alpha^q) \\
 &=& 
 \sum_{i=1}^{q+1} c_i^2 \gamma_{i,0}^2 + 
 \alpha^{q+1} \sum_{i=1}^{q+1} c_i^2 \gamma_{i,1}^2
 \\ &&  \quad 
 {} +  (\alpha+\alpha^{q}) \sum_{i=1}^{q+1} c_i^2 \gamma_{i,0}\gamma_{i,1}\\
 &=& 
 \Big(\sum_{i=1}^{q+1} c_i \gamma_{i,0}\Big)^2 + 
 \alpha^{q+1} \Big(\sum_{i=1}^{q+1} c_i \gamma_{i,1} \Big)^2
 \\ &&  \quad 
 {} +  (\alpha+\alpha^{q}) \sum_{i=1}^{q+1} c_i^2 \gamma_{i,0}\gamma_{i,1} \\
 &=& (\alpha+\alpha^{q}) \sum_{i=1}^{q+1} c_i^2 \gamma_{i,0}\gamma_{i,1}.
\end{IEEEeqnarray*}
Therefore, in the system \eqref{eq:c}--\eqref{eq:cg}, the equation \eqref{eq:c} can be replaced by
 \begin{IEEEeqnarray}{rCl}
 \label{eq:cgg}
 \sum_{i=1}^{q+1} c_i^2 \gamma_{i,0}\gamma_{i,1}        &=& 0.
\end{IEEEeqnarray}
The same arguments can be applied if we replace $\alpha$
by the primitive element $\alpha^t$, where $t$ is a power of two. 
So, \eqref{eq:cgg} can be replaced by 
 \begin{IEEEeqnarray}{rCl}
 \label{eq:cggt}
 \sum_{i=1}^{q+1} c_i^2 \gamma_{i,0}^{(t)}\gamma_{i,1}^{(t)}        &=& 0,
\end{IEEEeqnarray}
where $\gamma_{i,0}^{(t)}, \gamma_{i,1}^{(t)} \in \F_q$
are the coefficients in the expansion
$\zeta^i = \gamma_{i,0}^{(t)}  + \gamma_{i,1}^{(t)} \alpha^t$.

(ii) Now, consider an arbitrary nonsingular semilinear transform
\begin{IEEEeqnarray*}{rCl}
\Psi:\gamma_0 + \gamma_1 \alpha &\to& 
\big((a\gamma_0 + a'\gamma_1)+(b\gamma_0 + b'\gamma_1) \alpha \big)^t
\end{IEEEeqnarray*}
of $\F_{q^2}$ as a two-dimensional vector space over $\F_{q}$,
where $a$, $a'$, $b$, $b'$ are coefficients from $\F_q$ such that 
$\det\binom{a\ a'}{b\ b'}=ab'+a'b\ne 0$ and $t$ is a power of two.
So, $(\cdot)^t$ is an automorphism of $\F_q$,
and $\Psi\in\GGL(2,q)$.
Denote by $\bar e^{(i)}$, $i=1,\ldots,q+1$, the weight-$1$ vector in $\F_{q}^{q+1}$ with $1$ in the $i$-{th} coordinate.
Consider the mapping $\psi$ that maps
$\bar e^{(i)}$ to $d_i \bar e^{(j_i)}$, $d_i\in \F_q$ if $\Psi(\zeta^i) = d_i \zeta^{j_i}$
(by \eqref{eq:Fq=}, every nonzero element of $\F_{q^2}$
is uniquely represented as $d  \zeta^{l}$
for some $l\in\{1,\ldots,q+1\}$ and $d\in\F_q^\times$). 
Using the semilinearity identity
$\psi(\lambda \bar y + \mu  \bar z) = \lambda^t \psi(\bar y)  + \mu^t \psi(\bar z)$, 
we expand the domain of $\psi$ to the whole vector space $\F_q^{q+1}$. 
We will show that $\psi$ lies in $\AutS(M_q)$.

Consider a codeword  $\bar z=(z_{1},\ldots,z_{q+1}) \in M_q$.
It satisfies \eqref{eq:cg} and \eqref{eq:cgg} with $\bar c=\bar z$.
We have to show that \eqref{eq:cg} and \eqref{eq:cggt} hold for $\bar c=\psi(\bar z)$.
We start with  \eqref{eq:cg}.
Note that in the case $\bar c=\psi(\bar e^{(i)})$, the expression 
$\sum_{j=1}^{q+1} c_j \zeta^i$ turns to 
\begin{IEEEeqnarray*}{rCl}
\sum_{j=1}^{q+1} c_j \zeta^j &=&  d_i \zeta^{j_i} = \Psi(\zeta^{i})
\\&=&
\big((a\gamma_{i,0} + a'\gamma_{i,1})+(b\gamma_{i,0} + b'\gamma_{i,1}) \alpha \big)^t .
\end{IEEEeqnarray*}
So, for $$\bar c=\psi(\bar z)=\psi \big(\sum_{i=1}^{q+1} z_i \bar e^{(i)}\big)= \sum_{i=1}^{q+1} z_i^t \psi(\bar e^{(i)})$$ we have
\begin{IEEEeqnarray*}{rCl}
\sum_{j=1}^{q+1} c_j \zeta^j 
&=& \sum_{i=1}^{q+1} \Psi(z_i\zeta^i)  
\\ &=&
\sum_{i=1}^{q+1} z_i^t \big ((a\gamma_{i,0} + a'\gamma_{i,1})+(b\gamma_{i,0} + b'\gamma_{i,1}) \alpha \big )^t
\\ &=&
\Big( (a+ b \alpha) \sum_{i=1}^{q+1} z_i \gamma_{i,0} 
    + (a'+ b' \alpha) \sum_{i=1}^{q+1} z_i \gamma_{i,1} 
    \Big)^t =0
\end{IEEEeqnarray*}
(the last two sums are zeros because of \eqref{eq:cg} for $\bar c=\bar z$), 
which is exactly \eqref{eq:cg}.

Next, deal with \eqref{eq:cggt}.
If $\bar c=\psi(\bar e^{(i)})$,
then the only nonzero element of $\bar c$
is the $j_i$-{th} element,
with the value $c_{j_i}=d_i$,
and 
$d_i \zeta^{j_i} 
= c_{j_i} \gamma_{j_i,0}^{(t)} + c_{j_i} \gamma_{j_i,1}^{(t)} \alpha ^t 
= (a\gamma_{i,0} + a'\gamma_{i,1})^t +(b\gamma_{i,0} + b'\gamma_{i,1})^t \alpha ^t $.
Hence, for $\bar c=\psi(\bar e^{(i)})$ we have
\begin{IEEEeqnarray*}{rCl}
\sum_{j=1}^{q+1} c_j \gamma_{j,0}^{(t)}\cdot c_j\gamma_{j,1}^{(t)} = 
(a\gamma_{i,0} + a'\gamma_{i,1})^t(b\gamma_{i,0} + b'\gamma_{i,1})^t .
\end{IEEEeqnarray*}
Therefore, for $\bar c=\psi(\bar z)$ we have
\begin{IEEEeqnarray*}{rCl}
\sum_{j=1}^{q+1} c_j \gamma_{j,0}^{(t)}\cdot c_j\gamma_{j,1}^{(t)} 
\hspace{-15mm}
\\
&=&
\sum_{i=1}^{q+1} z_i^t (a\gamma_{i,0} + a'\gamma_{i,1})^t  \cdot z_i^t (b\gamma_{i,0} + b'\gamma_{i,1})^t 
\\ &=& 
a^tb^t \Big( \sum_{i=1}^{q+1}  z_i \gamma_{i,0} \Big)^{2t}   +
a'^tb'^t \Big( \sum_{i=1}^{q+1}  z_i \gamma_{i,1} \Big)^{2t} \\
&& \qquad\qquad
{}+ (ab'+a'b)^t  \Big( \sum_{i=1}^{q+1}  z_i^2 \gamma_{i,0}\gamma_{i,1} \Big)^{t}=0
\end{IEEEeqnarray*}
(the last three sums are zeros because of \eqref{eq:cg} and \eqref{eq:cgg} for $\bar c=\bar z$), 
which is exactly \eqref{eq:cggt}.
Clearly, different $\Psi$ correspond to different $\psi$; so, the number of semilinear automorphisms
of $M_q$ is at least $|\GGL(2,q)|$.
\end{proof}

Summarizing the results above, we find the automorphism groups of the Kasami and related codes.

 \begin{theorem}\label{th:AutGL} 
 All the automorphism groups
 $\AutO$, $\AutL_2$, $\AutS$ 
 of the codes $M_q^\bot$, $M_q$, 
 $K_q$, $K_q^\perp$ are isomorphic to each other and to
$\GGL(2,q)$. The groups $\AutM(M_q^\bot)$ and $\AutM(M_q)$ are isomorphic  to
$\GL(2,q)$.
 \end{theorem}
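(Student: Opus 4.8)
The plan is to obtain Theorem~\ref{th:AutGL} entirely by assembling the lemmas of Sections~\ref{s:concat} and~\ref{s:main} together with the duality isomorphisms $\AutM(C)\simeq\AutM(C^\perp)$ and $\AutS(C)\simeq\AutS(C^\perp)$ recalled in Section~\ref{ss:linear}; no new computation is required. First I would pin down $\AutS(M_q)$: Lemma~\ref{l:AutMqperp} gives $\AutO(M_q)=\AutL_2(M_q)=\AutS(M_q)$ and embeds this common group into $\GGL(2,q)$, while Lemma~\ref{l:AutMqperp_} supplies the matching lower bound $|\AutS(M_q)|\ge|\GGL(2,q)|$; as $\GGL(2,q)$ is finite, the two bounds force $\AutO(M_q)=\AutL_2(M_q)=\AutS(M_q)\simeq\GGL(2,q)$. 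Applying $\AutS(C)\simeq\AutS(C^\perp)$ to $C=M_q^\bot$ transports this to $\AutS(M_q^\bot)\simeq\GGL(2,q)$, and Lemma~\ref{l:A(M)} then collapses $\AutO(M_q^\bot)=\AutL_2(M_q^\bot)=\AutS(M_q^\bot)$, so all three are isomorphic to $\GGL(2,q)$ as well.

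For the monomial groups I would use the auxiliary Hamming code $H_q\supseteq M_q$ from the proof of Lemma~\ref{l:AutMqperp}. Since $\AutS(M_q)\le\AutS(H_q)$ and, by the previous paragraph, both have order $|\GGL(2,q)|$, they coincide; in particular every semilinear automorphism of $H_q$ stabilizes $M_q$. Recall that $\AutS(H_q)\simeq\GGL(2,q)$, the monomial subgroup $\AutM(H_q)$ corresponding under this isomorphism to $\GL(2,q)$ (it is the kernel of the homomorphism extracting the field-automorphism part). Now $\AutM(H_q)\subseteq\AutS(H_q)=\AutS(M_q)$ consists of monomial transforms fixing $M_q$, hence $\AutM(H_q)\subseteq\AutM(M_q)$; conversely $\AutM(M_q)\subseteq\AutS(M_q)=\AutS(H_q)$ consists of monomial transforms fixing $H_q$, hence $\AutM(M_q)\subseteq\AutM(H_q)$. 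Thus $\AutM(M_q)=\AutM(H_q)\simeq\GL(2,q)$, and $\AutM(M_q^\bot)\simeq\AutM(M_q)\simeq\GL(2,q)$ by duality.

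Finally I would transfer everything to the binary codes. By Theorem~\ref{th:2}, $K_q^\perp=\phi(M_q^\bot)$, so taking $C=M_q^\bot$ we have $\phi(C)=K_q^\perp$ and $\phi(C)^\perp=K_q$, while $C^\perp=M_q$ is the $[q+1,q-2,4]_q$ MDS code and so has minimum distance $4$. Hence hypothesis~(i) of Lemma~\ref{l:subgroup} is satisfied, and that lemma gives $\AutL_2(M_q^\bot)\simeq\AutM(K_q^\perp)\simeq\AutM(K_q)$. Combined with the first paragraph this yields $\AutM(K_q)\simeq\AutM(K_q^\perp)\simeq\GGL(2,q)$; and since $K_q$ and $K_q^\perp$ are binary, for each of them the groups $\AutM$, $\AutS$, $\AutL_2$, $\AutO$ coincide (Section~\ref{ss:linear}). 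Therefore $\AutO$, $\AutL_2$, $\AutS$ of all four codes $M_q^\bot$, $M_q$, $K_q$, $K_q^\perp$ are isomorphic to $\GGL(2,q)$, and $\AutM(M_q^\bot)\simeq\AutM(M_q)\simeq\GL(2,q)$, which is the statement.

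I do not expect a substantive obstacle, since all the analytic content is already contained in the preceding lemmas; the only point requiring care is organizational, namely keeping straight which groups are asserted equal versus merely isomorphic, and identifying the monomial part of $\GGL(2,q)$ with $\GL(2,q)$ through the Galois quotient. One should also keep in mind that $q\ge4$ is needed for $M_q$ — and hence $K_q$ — to be nondegenerate, the case $q=2$ being trivial.
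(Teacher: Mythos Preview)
Your proposal is correct and follows essentially the same route as the paper's own proof: both assemble Lemmas~\ref{l:A(M)}, \ref{l:AutMqperp}, \ref{l:AutMqperp_}, and \ref{l:subgroup} together with the duality isomorphisms for $\AutM$ and $\AutS$, use the equality of all four groups in the binary case, and appeal to the Hamming code $H_q$ (via the proof of Lemma~\ref{l:AutMqperp}) to identify $\AutM(M_q)$ with $\AutM(H_q)\simeq\GL(2,q)$. Your write-up is in fact slightly more explicit than the paper's in two places---you verify hypothesis~(i) of Lemma~\ref{l:subgroup} and you spell out both inclusions $\AutM(M_q)\subseteq\AutM(H_q)$ and $\AutM(H_q)\subseteq\AutM(M_q)$---but the underlying argument is the same.
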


\begin{proof}
For the binary codes, we have
$\AutO(K_q)=\AutL_2(K_q)=\AutS(K_q)=\AutM(K_q)$
and
$\AutO(K_q^\perp)=\AutL_2(K_q^\perp)=\AutS(K_q^\perp)=\AutM(K_q^\perp)$.
By Lemma~\ref{l:A(M)}, 
$\AutS(M_q^\bot)=\AutL_2(M_q^\bot)=\AutO(M_q^\bot)$.
By Lemmas~\ref{l:AutMqperp} and \ref{l:AutMqperp_},
$\AutS(M_q)=\AutL_2(M_q)=\AutO(M_q)\cong \GGL(2,q)$.
By Lemma~\ref{l:subgroup}, $\AutM(K_q^\perp) \cong \AutL_2(M_q^\bot)$.
For dual codes we have
 $\AutS(K_q)=\AutS(K_q^\perp)$,  $\AutS(M_q^\bot)=\AutS(M_q)$, and  $\AutM(M_q^\bot)=\AutM(M_q)$.
By the arguments in the proof of Lemma~\ref{l:AutMqperp}, 
the subgroup $\AutM(M_q)$ of $\AutS(M_q)$ coincides with the subgroup 
$\AutM(H_q)$ of $\AutS(H_q)$ and so isomorphic
to $\GL(2,q)$.
\end{proof}

\section{Generalized Kasami codes}\label{s:gal}
Let $q=p^m$ for some integer $m\ge 1$, and some {\em power of two} $p$. Thus the preceding section is the special case $p=2$ of the present section.
As in the previous section,
we consider the cyclic code $M_q^\bot$
of length $q+1$ defined over $\F_q$
by the check polynomial
$h(x)=(x-1)(x-\zeta)(x-\zeta^q)$, 
for some root $\zeta$ of order
$q+1$ over $\F_{q^2}$.
By~Theorem~\ref{th:mds},
$M_q^\bot$ is a $[q+1,3,q-1]_q$ MDS code
with weight distribution \eqref{eq:mds}.

Denote by $S_q$ the $p$-ary {Simplex code} of parameters $[\frac{q-1}{p-1},m,\frac{q}{p}]_p$.
This code is cyclic when $m$ and $p-1$ are coprime.
The main result of this section is the following theorem.
\begin{theorem}
The concatenation of  $M_q^\bot$ with $S_q$ 
is a $p$-ary code ${K_q^p}^\perp$ 
of parameters $[\frac{q^2-1}{p-1}, 3m]_p$,
with nonzeros $\alpha$, $\alpha^{(1+q)(p-1)}$, 
with $\alpha$ a primitive root of $\F_{q^2}$. 
If $m$ and $p-1$ are coprime, then this code is cyclic. 
\end{theorem}

\begin{proof}
The proof is analogous to that of Theorem~\ref{th:2}. The only difference is the definition of $\theta=\alpha^{(1+q)(p-1)}$, which is now a root of order $\frac{q-1}{p-1}$ in $\F_{q^2}$.
\end{proof}

When $p>2$, as far as we know, the codes $K_q^p$ are new, 
and will be called generalized Kasami codes in this paper.

\begin{corollary}
 The weight distribution of ${K_q^p}^\perp$ is
\begin{multline*}
 \Big[
\big\langle 0,1\big\rangle , \ 
\Big\langle \frac{q^2-q}{p},q\frac{(q^2-1)}{2}\Big\rangle , \ \\
\Big\langle \frac{q^2}{p},q^2-1\Big\rangle , \ 
\Big\langle \frac{q^2+q}{p},\frac{q^3-2q^2+q}{2}\Big\rangle
\Big]. 
\end{multline*}
\end{corollary}

\begin{proof}
Concatenation with $S_q$ multiplies the weights of $M_q^\bot$ by a factor $\frac{q}{p}$, 
while leaving their frequencies unchanged.
\end{proof}

\begin{corollary}
  The coset graphs of $K_q^p$ and $M_q$ are identical. 
In particular, when $p$ is even, $K_q^p$ 
is completely regular of diameter $3$, 
and intersection array 
$$\{q^2 - 1, q(q - 1), 1; 1, q, q^2 - 1\}.$$ 
\end{corollary}

\begin{proof}
This follows by Corollary~\ref{c:crc-ia}.
\end{proof}

\begin{remark}
 We conjecture that $\AutS(K_q^p)$ and $\Aut0(K_q^p)$
 are isomorphic to $\GGL(2,q)$, 
 because it is so for $M_q=K_q^q$ and $K_q=K_q^2$.
 However, our current tools do not allow to prove this easily. 
\end{remark}

\section{Union of cosets}\label{s:union}

In this section, we prove a simple but important fact that cosets of
a linear completely regular code from the classes considered in this paper
can be merged to result in another completely regular code with different
parameters, which can be linear, additive, or even non-additive.

\begin{proposition}
\label{l:union}
Let $p$, $q$ be powers of $2$, and let $C$ be
an additive completely regular code over $\F_p$
with intersection array
$\{ p^{2m} - 1 , p^{2m}- q , 1 ; 1 , q , p^{2m}-1 \}.$
Then
\begin{itemize}
 \item[\rm (i)] for every $k$ from $1$ to $\frac{p^{2m}}{q}-1$, there is
a completely regular code $B_k$
with intersection array
$$I_k=\{ p^{2m} - 1 , p^{2m}- kq , 1 ; 1 , kq , p^{2m}-1 \},$$
$B_k$ being the union of $k$ cosets of $C$;
\item[\rm (ii)] moreover,
if $k$ is a power of $2$, then there is an additive code
with intersection array $I_k$.
\end{itemize}
\end{proposition}

\begin{proof}
(i)
 Denote by $C^{(d)}$
 the set of vertices at distance $d$ from $C$, $d=0,1,2,3$.
 By the definition of a completely regular code,
 $(C^{(0)},C^{(1)},C^{(2)},C^{(3)})$
 is an equitable partition of $H(\frac{p^{2m}-1}{p-1},p)$ with the intersection matrix
 \begin{equation}\label{eq:mx}
  \left(
 \begin{array}{c@{\ }c@{\ }c@{\ }c}
s_{00} & s_{01} & s_{02} & s_{03} \\
s_{10} & s_{11} & s_{12} & s_{13} \\
s_{20} & s_{21} & s_{22} & s_{23} \\
s_{30} & s_{31} & s_{32} & s_{33}
 \end{array}
 \right)
 =
   \left(
 \begin{array}{cc@{\,}cc}
0 &  p^{2m}-1 & 0          & 0 \\
1 &       q-2 & p^{2m}-q   & 0 \\
0 &         q & p^{2m}-q-2          & 1 \\
0 &         0 & p^{2m}-1 & 0
 \end{array}
 \right).
 \end{equation}
By standard double-counting arguments, we have
$|C^{(3)}|/|C^{(0)}|= |C^{(2)}|/|C^{(1)}|= (p^{2m}-q)/q = r-1$, where $r=p^{2m}/q$.
Let $C_0=C^{(0)}=C$ and let $\{C_1,\ldots,C_{r - 1}\}$ be a partition
of $C^{(3)}$ into cosets of $C^{(0)}$.
Each of the cosets of $C$ is a completely regular code with the same parameters. For each $i$ from $0$ to $r - 1$, by $C_{r+i}$ we denote the set of vertices at distance $1$ from $C_i$.
In particular, $C^{(1)}=C_r$ and $C^{(2)}=\cup_{i=1}^{r-1}C_{r+i}$.

We state that $(C_i)_{i=0}^{2r-1}$ is an equitable partition with the quotient matrix
$$
\left(
\begin{array}{cccc|c@{}ccc}
 0 & 0 & \cdots & 0 & p^{2m}{-}1 & 0 & \cdots & 0  \\
 0 & 0 & \cdots & 0 & 0 & p^{2m}{-}1 & \cdots & 0  \\
 \vdots & \vdots & \ddots & \vdots & \vdots & \vdots & \ddots & \vdots \\
 0 & 0 & \cdots & 0 & 0 & 0 & \cdots & p^{2m}{-}1  \\ \hline
 1 & 0 & \cdots & 0 & q{-}2 & q & \cdots & q  \\
 0 & 1 & \cdots & 0 & q & q{-}2 & \cdots & q  \\
 \vdots & \vdots & \ddots & \vdots & \vdots & \vdots & \ddots & \vdots \\
 0 & 0 & \cdots & 1 & q & q & \cdots & q{-}2  \\ 
\end{array}
\right).
$$
To show this, we first note that it is a partition; this is straightforward from the fact that $C_0$, \ldots, $C_{r-1}$ are at distance $3$ from each other (from $s_{33}=0$ and $s_{23}=1$ in \eqref{eq:mx}). Next, we see that all the neighbors of
$C_i$, $i=0,\ldots,r-1$ are in $C_{r+i}$. 
Finally, every vertex of $C_{r+i}$ has one neighbor
from $C_{{i}}$ (by the definition),
exactly $s_{11}=q-2$ neighbors from $C_{r+i}$
(because $C_{{i}}$
is completely regular with intersection array \eqref{eq:mx}),
and exactly $s_{21}=q$ neighbors
from $C_{r+j}$
for every $j$ from
$\{0,\ldots,r-1\}\backslash \{i\}$
(because $C_{{j}}$ is completely regular
with intersection array \eqref{eq:mx}).

Now, it is straightforward to check that denoting
\begin{IEEEeqnarray*}{rClrCl}
B^{(0)}&=&\bigcup_{i=0}^{k-1}C_{i},\qquad &
B^{(3)}&=&\bigcup_{i=k}^{r-1}C_i, \\
B^{(1)}&=&\bigcup_{i=0}^{k-1}C_{r+i},\qquad &
B^{(2)}&=&\bigcup_{i=k}^{r-1}C_{r+i},
\end{IEEEeqnarray*}
we obtain an equitable partition $(B^{(0)},B^{(1)},B^{(2)},B^{(3)})$ with the intersection matrix
$$
\left(
 \begin{array}{cccc}
0 &    p^{2m} - 1 & 0          & 0 \\
1 &           kq-2 & p^{2m}-kq   & 0 \\
0 &             kq & p^{2m}-kq-2 & 1 \\
0 &             0 & p^{2m} - 1 & 0
 \end{array}
 \right).
$$
Therefore, the set $B_k=B^{(0)}$ is a completely regular code with the required intersection array.

(ii) We first note that the set $B_2$ constructed as in (i) is automatically additive. Then, we see that $B_2$ satisfies the hypothesis
of the proposition for the code $C$, with $2q$ instead of $q$.
So, repeating the construction $\log_2 k$ times, we get an additive completely regular code with  the required parameters.
\end{proof}

In the case $m=1$, the code $C$ from Proposition~\ref{l:union} can be chosen
as a {$\frac{1}{p}$-{th} part} of the $p$-ary Hamming code of length $p+1$ in the sense of \cite[(F.5)]{BRZ:CR} (note that $p$ is a power of $2$), in particular, as the code $M_q$,
in notation of Section~\ref{s:main}.

In \cite{BRZ:CR}, this code is classified as belonging to the group of completely regular codes obtained from perfect codes.
Proposition~\ref{l:union} shows that not only {$\frac{1}{p}$-{th} part}, but
also {$\frac{1}{2^i}$-{th} part}, $i=1,2,\ldots,\log_2 p$, of the $p$-ary Hamming code of length $p+1$ can be chosen to be an additive completely regular code. This remark essentially extends the class of completely regular codes obtained from perfect codes.

\section{Conclusion and open problems}\label{s:concl}

In this article, we have used a concatenation scheme to produce new completely regular codes from known ones.
The use of a one-weight code as the inner code is essential in preserving the coset graph structure from the outer code to concatenated code, as evidenced in the proof of Lemma~\ref{l:coset-graph}.
The {Simplex code} is essentially the only one-weight code in the Hamming graph by a classical result of Bonisoli~\cite{Bonisoli}.
It is worth investigating systematically if another choice of the outer code than the cyclic MDS code used here can lead to new constructions of completely regular codes.
More generally, replacing the Hamming graph by other distance regular graphs with a larger choice of one-weight codes could lead to new constructions of completely regular codes
in these graphs.

\subsection*{Acknowledgments}\label{s:ack}

The authors thank Evgeny Gorkunov for useful discussions 
and the anonymous referees for valuable suggestions 
towards improving the paper.


\providecommand\href[2]{#2} \providecommand\url[1]{\href{#1}{#1}}
  \def\DOI#1{{\small {DOI}:
  \href{http://dx.doi.org/#1}{#1}}}\def\DOIURL#1#2{{\small{DOI}:
  \href{http://dx.doi.org/#2}{#1}}}

\end{document}